  \pgfplotsset{compat=newest}
\newcommand{\Uone}{{\rm U}(1)}
\newtheorem{myremark}{Remark}
\newtheorem{myprop}{Proposition}
\newtheorem{mylemma}{Lemma}
\newtheorem{definition}{Definition}
\newtheorem{mycorol}{Corollary}
\DeclareMathOperator{\Tr}{Tr}
\DeclareMathOperator{\rank}{rank}
\DeclareMathOperator{\ddiag}{ddiag}
\DeclareMathOperator{\diag}{diag}
\DeclareMathOperator*{\minimize}{minimize\,}
\DeclareMathOperator*{\maximize}{maximize\,}
\DeclareMathOperator*{\argmax}{argmax\,}
\newcommand{\rmd}{\mathrm{d}}
\newcommand{\rmi}{\mathrm{i}}
\newcommand{\degree}{\mathrm{d}}
\pgfplotsset{compat=1.3, every axis label/.append style={font=\scriptsize}, tick label style={font=\tiny}, legend style={font=\scriptsize}}
\definecolor{lavander}{cmyk}{0,0.48,0,0}
\definecolor{violet}{cmyk}{0.79,0.88,0,0}
\definecolor{burntorange}{cmyk}{0,0.52,1,0}
\def\oran{orange!30}
\tikzstyle{Node2}=[draw,circle,burntorange, left color=\oran,
\tikzstyle{Node1}=[draw,circle,burntorange, left color=\oran,
\tikzstyle{light}=[draw,circle,violet, left color=violet,
\tikzset{
  LabelStyle/.style = { rectangle, rounded corners, draw,
                        minimum width = 2em, fill = yellow!50,
                        text = red, font = \bfseries },
  VertexStyle/.append style = { inner sep=5pt,
                                font = \Large\bfseries},
  EdgeStyle/.append style = {->, bend left} }
\newlength\figureheight 
\newlength\figurewidth
\title{Ellipsoidal embeddings of graphs}
\author{ Micha\"el Fanuel\thanks{Université de Lille, CNRS, Centrale Lille, UMR 9189-CRIStAL, F-59000 Lille, France, \texttt{{michael.fanuel@univ-lille.fr}}} \and Antoine Aspeel\thanks{Electrical Engineering and Computer Science Department, Univ.\ of Michigan, Ann Arbor, USA, \texttt{{antoinas@umich.edu}}}
\and Michael T. Schaub\thanks{RWTH Aachen University, North-Rhine Westphalia, Germany, \texttt{{schaub@cs.rwth-aachen.de}}}  
\and Jean-Charles Delvenne\thanks{ICTEAM and CORE, Universit\'e catholique de Louvain, Belgium. \texttt{{jean-charles.delvenne@uclouvain.be}}}
}
\date{\today}
\begin{document}
\maketitle

\begin{abstract}
    Due to their flexibility to represent almost any kind of relational data, graph-based models have enjoyed a tremendous success over the past decades.
    While graphs are inherently only combinatorial objects, however, many prominent analysis tools are based on the algebraic representation of graphs via matrices such as the graph Laplacian, or on associated graph embeddings.
    Such embeddings associate to each node a set of coordinates in a vector space, a representation which can then be employed for learning tasks such as the classification or alignment of the nodes of the graph.
    As the geometric picture provided by embedding methods enables the use of a multitude of methods developed for vector space data, embeddings have thus gained interest both from a theoretical as well as a practical perspective.
    Inspired by trace-optimization problems, often encountered in the analysis of graph-based data, here we present a method to derive ellipsoidal embeddings of the nodes of a graph, in which each node is assigned a set of coordinates on the surface of a hyperellipsoid.
    Our method may be seen as an alternative to popular spectral embedding techniques, to which it shares certain similarities we discuss.
    To illustrate the utility of the embedding we conduct a case study in which we analyse synthetic and real world networks with modular structure, and compare the results obtained with known methods in the literature.
\end{abstract}

\section{Introduction: Graphs and Embeddings}
Graphs enable us to conceptualise many different complex systems in a simple and compact manner.
A graph $\mathcal G(\mathcal V, \mathcal E)$ consist of a set of vertices (or nodes) $\mathcal V$ and a set of edges (or links) $\mathcal E$.
The node set $\mathcal V$ is used to denote the entities present in the system, and the edges in the set $\mathcal E$ designate the interactions between these entities.
By specifying a suitable set of nodes and edges, most types of relational data can be abstracted as a graph.
Accordingly, graphs have enjoyed an enormous success as mathematical modelling tools over the last decades, pervading essentially all areas of science~\cite{Strogatz2001,Newman2003,Boccaletti2006,Arenas2008a,Dorogovtsev2008}, from neurobiology~\cite{Sporns2009} to statistical physics~\cite{AlbertBarabasi}.

Arguably, a large part of the success of graphs as modelling tools is due to the minimal, yet versatile mathematical structure of graphs. 
We may enrich simple graphs for additional modelling flexibility, e.g., by allowing for weightings or directionality of the edges, or adding some form of multilayer structure.
%Nevertheless, at their core all these models are underpinned by the combinatorial structure of a graph.
Yet, when discussing graphs we often do not think of them in a purely combinatorial fashion. 
Rather, we tend to reason about graphs in terms of their algebraic representations as matrices, such as an adjacency matrix or a Laplacian; or we consider them in the form of visualizations via diagrams.

Though not inherent to the definition of graphs, in practice, both the algebraic and the visual representation are undeniably important for theory and applications.
Algebraic representations of graphs are, for instance, essential for computations and provide links to tools from matrix theory such as spectral analysis that enables a richer understanding of graphs.
%To reason about graphs in a simple, accessible manner, however, the ability to visualize a graph is of enormous practical importance.
%Indeed, even though there is no geometric structure attached to a graph \emph{a priori}, many graph-theoretic concepts such as paths, cycles, triangles, or star graphs can be best grasped intuitively in a visual fashion. 
Similarly, when talking about specific graph structures such as clusters, we often provide geometrical pictures that are supposed to convey the graph structure visually.
%Stated differently, we embed the nodes of the graph in a metric space, namely, the two dimensional plane.
However, finding a good visualization of a graph in such a Euclidean space is not an easy task, as the (in)famous `hairball' pictures encountered when visualizing many large graphs highlight.

Before defining formally ellipsoidal embeddings, we provide below a first intuitive picture from the perspective of graph-drawing.

\subsection{The ellipsoidal embedding}\label{sec:additional_interpretations}
We identify the node-set $\mathcal{V}$ of a graph with the natural numbers $\{1,\dots,n\}$. 
We want to represent the vertex $i$ with a \emph{row} vector $h_i \in \mathbb{S}^{d_0-1}\subset \mathbb{R}^{d_0}$ for all $i=1,\dots, n$, where $d_0\geq 2$ is fixed. The integer $d_0$ is \emph{a priori} the dimension of the embedding space, as $\mathbb{S}^{d_0-1}$ denotes the $d_0-1$ dimensional Euclidean unit sphere.
The vectors $h_i$ are obtained by minimizing the energy 
\begin{equation}
E(H) = -\sum_{i,j=1}^n M_{ij} \langle h_i, h_j\rangle,\label{eq:EnergyMin}
\end{equation}
which depends on their Euclidean dot product $\langle h_i, h_j\rangle = h_i h_j^\top$ and where $M$ is a descriptor matrix of the graph, such as a modularity matrix or a Laplacian-based matrix as we detail in what follows.

\begin{figure}[!htb]
\centering
\begin{minipage}{1.\textwidth}
\centering
\includegraphics[trim={5.5cm 8.5cm 5.5cm 8.5cm},clip,scale = 0.5]{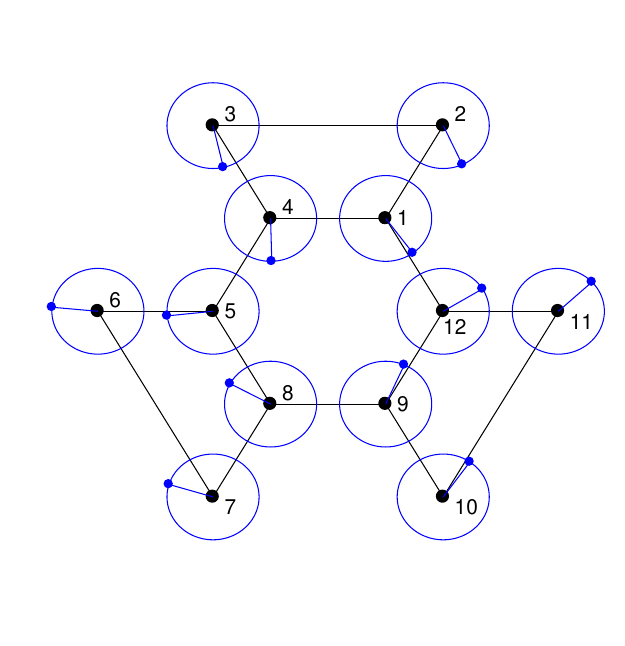}
\includegraphics[trim={4.5cm 8.5cm 4.5cm 8.5cm},clip,scale = 0.5]{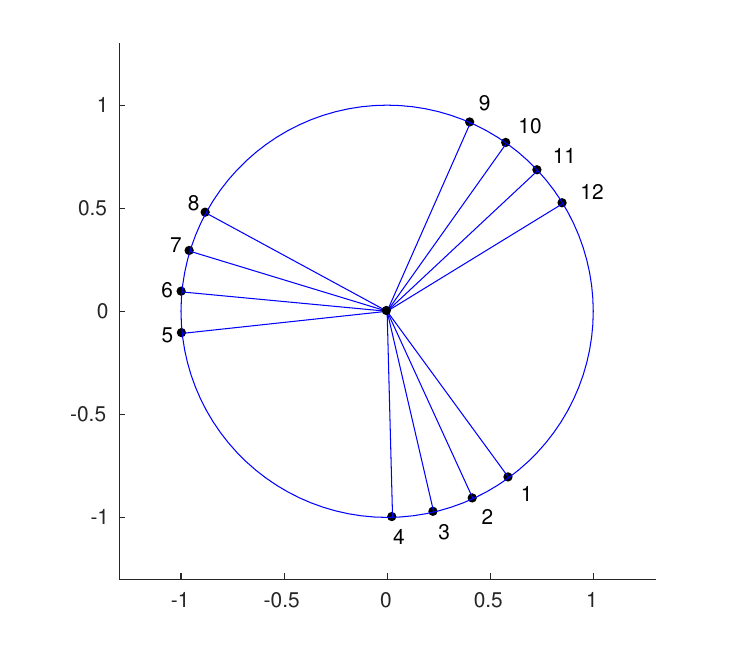}
\end{minipage}
\caption{\textbf{Modularity-based spherical embedding.}
Left: a spherical embedding -- see \cref{Def:Embedding} -- on the depicted graph (black lines), which yields a coordinate vector $h_i$ for each node $i$,  interpreted as a `spin'  attached to each node (blue). Right: each of these spins $h_i$ is drawn on the same hypersphere, giving rise to the depicted spherical embedding. The alignment of the spins $h_i$ reflects some of the neighbourhood structure in the graph. 
Here we chose $d_0 = 2$ and the graph descriptor matrix is the modularity $M=Q$; see \cref{e:Q_matrix}.}
\label{Fig:Foam}
\end{figure} 

Intuitively, the energy~\eqref{eq:EnergyMin} is minimized if nodes in the graph for which $M_{ij}>0$ are positioned close to each other on the sphere $\mathbb{S}^{d_0-1}$, while pairs of nodes for which $M_{ij}<0$ will repel each other. Unlike the usual force directed visualizations which place the nodes of a graph in the plane with attracting and repelling forces, the positions of the nodes are here constrained to be within a compact set and the force between two nodes $i$ and $j$ depends on the angle between the position vectors $h_i$ and $h_j$. The nature of the coupling between the nodes is dictated by the choice of the matrix $M$, which is taken to be a descriptor matrix of the graph, such as the modularity matrix $Q$ or a normalized Laplacian $\mathcal{L}$ that are discussed hereafter. Typically, $M_{ij}>0$ (resp. $<0$) if $i$ and $j$ are strongly (resp. loosely) connected.

To illustrate the embedding, we consider the example graph shown in Figure~\ref{Fig:Foam}: a toy network arranged in a set of 3 groups of 4 nodes (left).
By computing a spherical embedding of this graph, we obtain a coordinate vector for each node, interpreted as a `spin' variable valued on $\mathbb{S}^1$; see Figure~\ref{Fig:Foam} left for which $d_0 = 2$. Alternatively, we can consider all those coordinates on a single hypersphere as in Figure~\ref{Fig:Foam} right, illustrating the here proposed embedding. As should be apparent from Figure~\ref{Fig:Foam}, neighbouring nodes that are more tightly coupled in the graph tend to align their spins.
For convenience, the vectors $h_i$ are viewed as the rows of a matrix $H\in \mathbb{R}^{n\times d_0}$, that is $h_i = H_{i\ast}$. Then, the energy minimization~\eqref{eq:EnergyMin} can be rephrased as the following maximization problem
\begin{equation}
\maximize_{H\in \mathbb{R}^{n\times d_0}}\Tr\Big(H^\top M H\Big), \text{ subject to } \| H_{i\ast}\|_2 = 1 \text{ for all } 1\leq i \leq n.\label{eq:relax0}
\end{equation}
\begin{figure}[!htb]
\centering
\begin{tabular}{cc}
  Laplacian-based ellipsoidal embedding & Laplacian-based spectral embedding \\
\includegraphics[scale = 0.5]{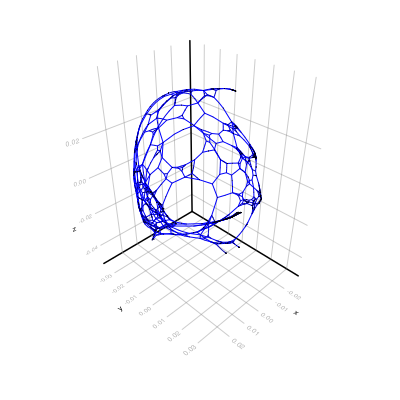} 
&
\includegraphics[scale = 0.5]{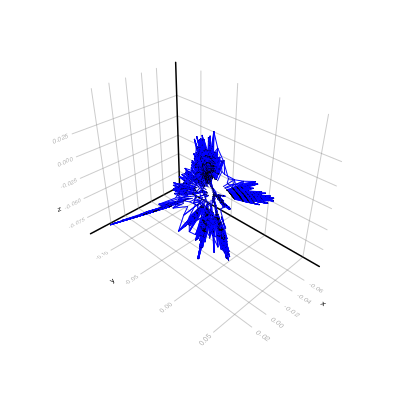}\\
\includegraphics[scale = 0.45]{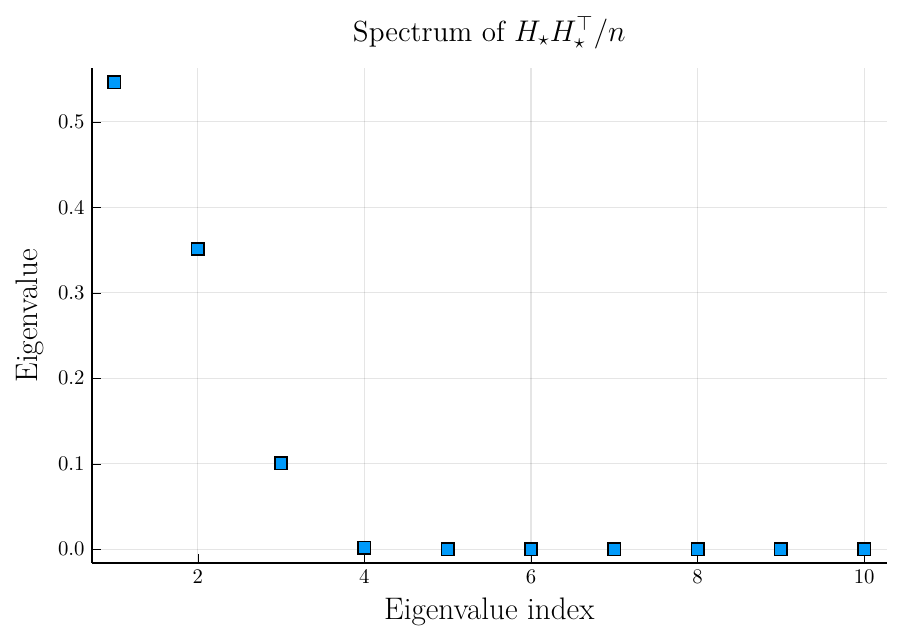}&
\includegraphics[scale = 0.45]{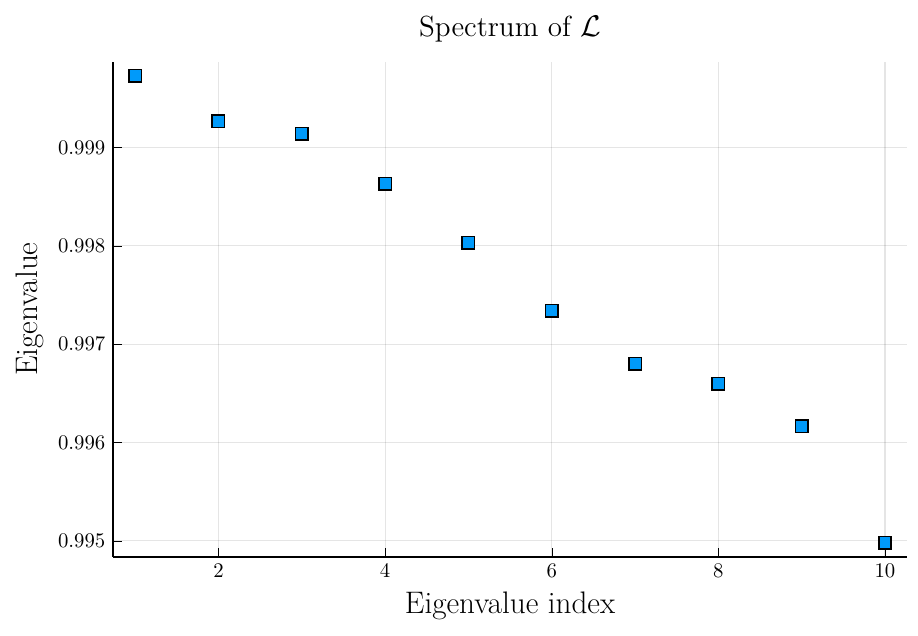}
\end{tabular}
\caption{\textbf{Ellipsoidal vs spectral embedding.} Laplacian-based ellipsoidal embedding (top left, $d_0 = 10$, $M = \mathcal{L}$, see \cref{eq:NormLap}) and the spectral embedding (top right) of the \textsc{PowerEU} graph -- see Table~\ref{Table:RealNetworks} -- thanks to the three leading eigenvectors of the Laplacian $\mathcal{L}$. 
On the bottom left, the spectrum of  $\frac{1}{n}H_\star H^\top_\star$ associated to the embedding with $d_{\rm eff} = 3$. 
The spectrum of this normalized Laplacian matrix $\mathcal{L}$ is given on the botton right.
\label{Fig:Power}}
\end{figure}

Even though the embedding space is \emph{a priori} of dimension $d_0 - 1$ which can be as large as we want, we observe that an optimal solution $H^\star$ of~\eqref{eq:relax0} corresponds to an embedding on a subspace of \emph{lower effective dimension} $d_{\rm eff} \leq d_0 $, that is, the embedding is effectively on $\mathbb{S}^{d_{\rm eff}-1}$. 
%Namely, for the graph of Figure~\ref{Fig:Foam} the effective dimension is $d_{\rm eff} = 2$.  

As an illustration of the low dimensionality of the embedding in the case of a real networks, the Power Grid of Europe graph (with $n = |\mathcal{V}|=2712$ and $ |\mathcal{E}|=3580$) is embedded in Figure~\ref{Fig:Power}, while a list of other real networks where our method has been applied is given in Table~\ref{Table:RealNetworks} in the appendix.
To show that this embedding is indeed quite different to the often considered spectral embedding based on the same descriptor matrix $M$, the corresponding results are drawn on the right-hand side of Figure~\ref{Fig:Power}. This time in order to illustrate that our approach is not restricted to the modularity, we choose the descriptor matrix to be the normalized Laplacian $\mathcal{L}$ as discussed in the sequel.
As an advantage of the ellipsoidal embedding, we observe on the bottom left of Figure~\ref{Fig:Power} that the effective dimension of the embedding $d_{\rm eff} = 3$ can be read out from the decay of the spectrum, whereas the spectrum of the normalized Laplacian $\mathcal{L}$ does not exhibit a clear gap. The tail of eigenvalues is found to be numerically small and is therefore neglected.
As showed in Figure~\ref{Fig:Foam}, we may expect the ellipsoidal embedding to highlight structures of graphs as it is discussed in the next section.

\subsection{Contributions}

We propose a method for embedding graphs on ellipsoids.
The dimension of this embedding is automatically determined as part of the algorithm and is not required as an input, the analyst merely needs to constrain the maximum dimension $d_0$ of the embedding space for computational purpose.
To obtain our embedding, we make use of a \emph{generalized power method with momentum}, a simple deterministic algorithm with a random initialization.
%that is guaranteed to converge to a first order critical point of optimization problem associated to the ellipsoidal embedding.
In practice, this algorithm yields empirically a small effective embedding dimension, highlighting that many graphs can be represented in terms of a low-dimensional parametrization.
For the examples in this paper, the embedding dimensions ranges from $d_{\rm eff} = 2$ for simple graphs to $d_{\rm eff} = 6$ for graphs with more complicated structures. 
For graphs with a large embedding dimension, the embedded data can then be further analysed via multi-dimensional scaling or principal component analysis to visualize the results. In the latter case, the dimensionality reduction relies on a thresholding of components with small variance.

While the mathematical formulations are different, there are certain analogies with spectral embeddings~\cite{QinRohe,Luxburg2007, Rohe2011, RDPgraphs} that we discuss.
As a case study to demonstrate the utility of the derived embedding, we show how the embedding can be utilized to perform graph clustering, while spectral embeddings are commonly used as a preprocessing step for clustering.
For this case study, we perform an embedding in which the modularity matrix -- a well-known tool for graph clustering -- is chosen as the descriptor matrix, and then use the resulting embedding to perform graph clustering. 
Note, however, that any other descriptor matrix could have been chosen in this context resulting in a different embedding.
Given a descriptor matrix, the corresponding (meta)-algorithm consists in first computing the embedding and then performing a clustering procedure, by using a vector partionning algorithm, which is of independent interest.
We call this strategy \emph{embed-and-partition}.
A point of practical interest in this context is that our embedding-based clustering method directly provides an estimate for the relevant number of clusters. Basically, the clustering starts with a given number of `centroid' nodes initially sampled with respect to their degree and the algorithm then optimizes their positions by possibly merging a few of them.
This is in contrast to many other methods based on embeddings, which typically rely on specifying a desired number of clusters a priori and then perform a $k$-means clustering or a similar  procedure on the obtained embedding coordinates.
Interestingly, for a modularity based embedding as studied here, its performance is often comparable to the highly successful Louvain method~\cite{Louvain} on benchmarks graphs in terms of efficiency and for a limited increase of computing time; see \cref{Fig:nmi_vs_mu}.
As a limitation of our {embed-and-partition} approach, whereas $d_{\rm eff}$ is often smaller than $d_0$ as we mentioned above, we observe that the clustering performance is especially good when applied on the $d_0$-dimensional embedding coordinates rather than on the $(d_{\rm eff} - 1)$-dimensional sphere or ellipsoid.

We further present empirical evidence that for networks with modular structure, the weaker the community structure is defined, the higher the (automatically inferred) embedding dimension --- thus highlighting a connection between the hardness of community detection and the ability to compress the network structure via a low-dimensional ellipsoidal embedding.

A code implementing our embedding and partitioning methods is available at \url{https://github.com/mrfanuel/EllipsoidalGraphEmbedding.jl} in the form of a Julia module.

\subsection{Outline}
The remainder of this paper is organized as follows.
To set the scene and provide additional motivation for our embedding, we first provide some concrete examples of trace optimization problems in the context of network analysis in Section~\ref{sec:Motivation}.
Subsequently, the mathematical formulation of the embedding problem is explained in Section~\ref{sec:Math}, and a novel optimization method is proposed to solve it. Some mathematical properties of the algorithms used for obtaining the ellipsoidal embedding are given in Appendix~\ref{sec:AppendixMath}.
We discuss relations and differences of the here proposed embedding with other problem formulations in network analysis and spectral embeddings.
In Section~\ref{sec:Community}, we discuss the relationship of ellipsoidal embeddings to community detection in the form of modularity maximization, as a concrete application for our embedding.
Specifically, we introduce  a greedy algorithm based on the ellipsoidal embedding to obtain the clustering of a graph.
Then, we discuss the results of the embedding based community detection algorithm, by using synthetic and real-world networks with up to one million nodes and several million edges.
Our results show that the proposed embedding faithfully captures relevant structural features of a graph. The details of the numerical simulations are given in  Appendix~\ref{sec:NumericalResults}. 
We conclude with a brief discussion in Section~\ref{sec:Conclusions}.
To improve the readability of the paper, the proofs of the mathematical results are relegated to the Appendix. 
\subsection{Notation}
In terms of notation, we denote by $\mathcal{G}$ a connected graph with vertex set $\mathcal{V}$ and edge set $\mathcal{E}$. 
The number of nodes in the graph is denoted by $n = |\mathcal{V}|$ and we suppose that the graphs under consideration are undirected.
For convenience, we always identify the node-set of a graph with the natural numbers $\{1,\dots,n\}$. 
The adjacency matrix $A\in \mathbb{R}^{n\times n}$ is then defined such that $A_{ij} = 1$ if and only if $i$ is connected to $j$ and $A_{ij} =0$ otherwise.  
It is customary to introduce the degree vector $\degree$ with elements $\degree_i = \sum_{j=1}^{n}A_{ij}$ and total edge weight $m = \sum_{i=1}^{n}\degree_i/2$.  
Based on the degree matrix $D = \diag(\degree)$, we further define the combinatorial Laplacian as $L = D-A$, and the normalized Laplacian $L_N = D^{-1/2}LD^{-1/2}$.
If $M$ is a square, positive semi-definite (psd) matrix we write $M\succeq 0$ (recall that $M$ is \emph{psd} if and only if $v^\top Mv\geq 0$ for all vectors $v$). The $\ell$-th column and $i$-th row of a matrix $H$ will be denoted by $H_{\ast\ell}$ and $H_{i\ast}$ respectively. 
Finally, the nuclear norm of a matrix $M$ is defined as follows: $\|M\|_\star = \Tr(\sqrt{M^\top M})$.

\section{Motivation and Background}\label{sec:Motivation}

\subsection{Ellipsoidal embedding and graph structures}\label{sec:graph_structures}
While graph drawings can help to better understand the topological structure of graphs, we may want to invert this process and try to reason about the graph itself by means of a carefully defined geometrical embedding of the graph.
Hence, we may want to design a graph embedding that reflects certain topological properties of the graph geometrically.
%Stated differently, we would like to endow the coordinates in our drawing a meaning in terms of the graph, rather than defining the coordinates based on aesthetic or other considerations.
These are natural ideas, and going back and forth between a geometric embedding and a graph representation of data (sometimes implicitly) underpins a host of successful methods in data science.
\begin{itemize}[leftmargin=*]
    \item[-] For instance, spectral clustering~\cite{ShiMalik,Newman2013} may be seen as a graph embedding into a Euclidean space on which we perform `classical' clustering afterwards (e.g., using $k$-means). 
        Many different spectral clustering methods exist that rely on choosing the `right' algebraic representation of the graph, such that certain features of the graph are emphasized~\cite{Luxburg2007,Luxburg2008,Saerens2004,zhang2008}.
    \item[-] Manifold learning techniques such as diffusion maps~\cite{Coifman2005,Lafon2006,Nadler2006} provide another interesting example.
        In this case, we start with a geometric point cloud, from which we construct a graph based on the geometric data.
        From this graph we then derive a new geometric representation of the data by embedding this graph via a set of diffusion coordinates, thereby providing a data parametrization in a lower dimensional Euclidean space.
    \item[-] In generative models for graphs we often posit the existence of a `correct' embedding in the construction, e.g., hyperbolic embeddings and hyperbolic latent models have been proposed to model and fit networks~\cite{Asta2015}.
    Similarly, random dot-product graphs~\cite{RDPgraphs} and other continuous latent graph models~\cite{Lovasz} posit that an observed graph has been generated with an implicit set of latent geometric coordinates.
    \item[-] Even for discrete latent random variable models such as the stochastic block-model~\cite{Abbe}, continuous embeddings provided by the spectral properties of the observed graph can provide useful information about the graph, which is harnessed in spectral methods for community detection such as~\cite{Rohe2011}.
    \item[-] Recent approaches for graph clustering, see \cite{PHR-D22} and 
    \cite{GvdHL21}, also use specifically a spherical embedding.
\end{itemize}
What is common among all these approaches is that an embedding of a graph into a metric space provides us with additional means to approximately solve hard problems, such as graph comparisons, clustering, etc.--- by using the rich toolkit of continuous mathematics within the embedding domain.
Indeed, there is a recent surge of interest in graph embeddings because of this reason: some recent works propose to use machine-learning techniques to learn an embedding to reflect certain topological features of the nodes~\cite{Gutierrez,Hamilton2017,node2vec}.

Typically, a spectral embedding uses the $d_0$ leading eigenvectors of a symmetric $n\times n$ matrix $M$ for $n\geq d_0$. These eigenvectors are columns of $H_\star\in \mathbb{R}^{n\times d_0}$ which is a solution of the following trace maximization problem
\begin{align}
    \maximize_{H\in \mathbb{R}^{n\times d_0}}  \Tr\left( H^\top M H\right)
    \text{ subject to } & H^\top H = \mathbb{I}_{d_0\times d_0},\label{eq:Eigenmodularity}
\end{align}
where the constraint implements the orthonormality of the columns of $H$.
Similar to the spectral embeddings, the ellipsoidal embedding, calculated with~\eqref{eq:relax0}, naturally emerges from trace maximisation problems of the form
\begin{subequations}\label{eq:trace_opt}
\begin{align}
    \maximize_H \quad& \Tr\left( H^\top M H\right) \label{eq:Objective}\\
    \text{subject to}\quad & H \in \mathcal Z\label{eq:Constraint},
\end{align}
\end{subequations}
where $\mathcal Z$ denotes the set of constraints.
Here, the matrix $M$ in \eqref{eq:Objective} is an algebraic descriptor of the network. 
For instance, $M$ could be a Laplacian matrix, or a feature matrix derived from the network such as a matrix $M$ with entry $M_{ij}$ counting all walks up to length $k$ between any two nodes $i,j$.
In order to illustrate the relevance of problems of the type~\eqref{eq:trace_opt}, we give here a few examples of such problems in the context of the analysis of graphs and networks.

\subsection*{Trace optimization problems in network analysis}
Laplacian matrices play a major role in network analysis, as their spectral properties are intimately related to the network structure. 
Accordingly, they have been analysed from a variety of angles~\cite{Chung:1997,Mohar91}.
One fundamental problem in which the graph Laplacian emerges is the problem of graph partitioning.
This problem can be phrased as a \emph{penalized cut} problem~\cite{zhang2008}, which includes other popular notions such as normalized cut~\cite{ShiMalik} and ratio cut~\cite{Chan1994}.

Let $H\in \{0,1\}^{n\times k}$ be a binary indicator matrix associated to a partition of the graph with $k$ clusters, i.e., $H_{ic}=1$ if $i$ belongs to group $c$ and zero otherwise.
Based on this definition, the penalized cut problem is to minimize the objective function
\begin{align*}
\Tr\left( H^\top L H (H^\top \Delta H)^{-1}\right) \quad \text{(Penalized Cut)},
\end{align*}
subject to the constraint that $H$ is a binary indicator matrix of the form described above,  and $\Delta$ is a positive definite diagonal weighting matrix.
Note that this objective may be rewritten in the following more compact form
$\Tr \left( Z^\top L Z \right)$ with $Z = H(H^\top \Pi H)^{-1/2}$, i.e., can be directly mapped to problem~\eqref{eq:trace_opt}.

Apart from penalized cut, this class of optimization problems includes many other problems of interest.
For instance, the  formulation~\eqref{eq:trace_opt} includes maximum likelihood estimation of the partitions of certain stochastic blockmodels~\cite{Peixoto,Amini2018,Hajek2016}, a type of generative network models that has gained enormous interest in network analysis recently. 
Furthermore, several synchronization problems~\cite{MontanariPNAS,Boumal} can be formulated in this form, such as the $\Uone$-synchronization problem on graphs~\cite{Singer2011,Boumal,Boumal:2016}:
\begin{align*}
\max_{H\in \mathbb{C}^n}  \Tr(\Theta H H^*) \text{ subject to } H \in \Uone^n, \quad \text{($\Uone$-synchronization)},
\end{align*}
where $\Theta$ is a Hermitian matrix, with entries such that $\Theta_{ij} = \exp \left(\rmi \theta_{ij}\right)$ if $ij$ is an edge of the graph and $\Theta_{ij} =0$ otherwise.  
Here, $H^*$ denotes the Hermitian conjugate of $H$.

Another class of important problems of the above form are those associated to modularity optimization, which we will adopt as our running example in the following.
We remark, however, that most of our arguments are equally applicable, \emph{mutatis mutandis}, to other problem contexts.

\subsection*{Modularity maximization}\label{subsec:modularity}
For a given network with adjacency matrix $A$, let $Q$ be the modularity matrix given by 
\begin{equation}
    Q = \frac{1}{2m}\left(A-\frac{\degree \degree^\top }{2m} \right).\label{e:Q_matrix}
\end{equation}
The problem of optimizing the modularity can be cast in the form~\eqref{eq:trace_opt} as 
\begin{subequations}\label{eq:Combinatorial}
\begin{align}
    \maximize_H\quad  & \Tr\Big(H^\top Q H\Big), \quad \text{(modularity maximization)}\\
    \text{subject to}\quad & H \in \mathcal Z,
\end{align}
\end{subequations}
where $\mathcal Z$ is the set of partition indicator matrices with any number of groups $k$, which obey the definition above: each node is in one and only one group and $H_{ic}=1$ if $i$ belongs to group $c$ and zero otherwise.
Similar to the other problems discussed above, modularity optimization is an NP-hard problem~\cite{Brandes2006}, and accordingly several heuristics have been proposed to solve the above problem, including greedy~\cite{Louvain} and spectral algorithms~\cite{NewmanSpectral}.

%By solving iteratively the problem~\eqref{eq:Eigenmodularity} supplemented with orthogonality constraints, several eigenvectors of $Q$ can be calculated. 
%Let us assemble those $d_0$ eigenvectors in a matrix $V = [v_1, \ldots, v_{d_0}] \in \mathbb{R}^{N\times d_0}$.
%The spectral embedding is now defined by using the $i$th row of $V$ as the coordinate for node $i$ in the graph, i.e., we consider each row of $V$ as a point in a $d_0$ dimensional Euclidean space.

\section{Ellipsoidal embeddings\label{sec:Math}}
In view of the embedding interpretation of trace-optimization problems such as modularity optimization, we propose here another embedding that consists in finding a generalized label matrix $H\in\mathbb{R}^{n\times d_0}$ whose rows are to be interpreted as coordinate vectors, by solving~\eqref{eq:relax0} with $M = Q$.
Observe that in contrast to the spectral embedding~\eqref{eq:Eigenmodularity} formulation, where the \emph{columns} of $H$ were supposed to have unit 2-norm, we here apply a constraint on the \emph{rows} of $H$.
The resulting formulation~\eqref{eq:relax0} is a relaxation of~\eqref{eq:Combinatorial} in that every matrix $H \in \mathcal Z$ fulfills the condition $\| H_{i\ast}\|_2 = 1$.
To better understand the above optimization problem, and how it relates to an ellipsoidal (spherical) embedding let us comment on a few features of the above formulation.
First, notice that the enforced constraints on $H$ imply that the feasible space for $H$ is a Cartesian product of spheres $H\in (\mathbb{S}^{d_0-1})^n$, i.e., every row of $H$ defines a point on a hypersphere of dimension $d_0-1$.
Hence the matrix $H$ defines embedding coordinates for each node in the graph, that can be interpreted as points on a hypersphere.
Note that the resulting set of coordinates is only unique up to unitary transformation.
Namely, for any orthogonal matrix $U$, i.e. satisfying $U^\top U = I = U U^\top$, the matrices $H$ and $HU$ will have exactly the same objective value.
Only the matrix $HH^\top$ is invariant under these orthogonal transformations.
%However, the embedding problem~\eqref{eq:relax0} is not a relaxation of the combinatorial optimization problem~\eqref{eq:Combinatorial} in a strict sense (except if $d_0 = N$), since $d_0$ is fixed.

Unlike in the spectral case~\eqref{eq:Eigenmodularity}, $d_0$ does not correspond directly to the embedding dimension, but rather corresponds to an upper-bound of the embedding dimension. Interestingly, in many cases, the optimal embedding can have several `empty' columns in $H$, which can be dropped without loss of information.
In practice, in our simulations, we often chose the integer $d_0$ to be smaller than $50$ and never larger than $250$, and we observe that the obtained embedding dimension (see the following sections for a more detailed discussion) is typically much smaller than $d_0$. This is the empirical reason why the ellipsoidal embedding is often low dimensional.

\subsection{Defining spherical and ellipsoidal embeddings}
Let $H_\star$ be an optimal solution of the embedding problem \eqref{eq:relax0} and let $d_{\rm eff} = \rank(H_\star)\leq d_0$. 
For each node $i$, the $i$-th row of $H_\star$ now defines an embedding of the node in a sphere.
However, as this solution is only unique up to rotations/reflections.
Hence, aiming to reduce embedding invariances,  we employ a singular value decomposition (SVD) for $H$.

\begin{definition}[Spherical and Ellipsoidal embeddings]\label{Def:Embedding} Let $H_\star = U S V^\top$ be a SVD of a solution of~(\ref{eq:relax0}) with $S = \diag(s)$, where $s_1 \ge \ldots \ge s_r > 0$. Further, let $U_{i\ast}$ and $\Sigma_{i\ast}$ be the $i$-th row of $U$ and $\Sigma := US$, respectively.
We define the spherical embedding by the map
\begin{equation*}
i\mapsto \Sigma_{i\ast}:= \begin{bmatrix}
s_1 U_{i1} & s_2 U_{i2} &\dots & s_r U_{ir}
\end{bmatrix},\quad 1\leq i\leq n,
\end{equation*} whereas the
ellipsoidal embedding is defined as
\begin{equation*}
i\mapsto U_{i\ast} := \begin{bmatrix}
U_{i1} & U_{i2} &\dots & U_{ir}
\end{bmatrix},\quad 1\leq i\leq n.
 \end{equation*}
\end{definition}

To see that the above mappings define an embedding on a sphere in the same way as the rows of $H_\star$, consider the matrix $\rho = H_\star H^\top_\star = \Sigma\Sigma^\top \in \mathbb{R}^{n\times n}$, which is a \emph{psd} matrix with elements given by the inner product $\rho_{ij} = \Sigma_{i\ast}\Sigma_{j\ast}^\top$. 
Since by definition of the embedding problem~\eqref{eq:relax0}, the diagonal elements of $\rho$ have to be equal to 1, we know that $\|\Sigma_{i\ast}\|_2 = 1$, and hence the embedding vector $\Sigma_{i\ast}$ defines a point on the unit sphere $\mathbb{S}^{r-1}$. 

To understand the ellipsoidal embedding, we can define an alternative inner product on $\mathbb{R}^{r}$ denoted by $\langle\cdot, \cdot\rangle_{S^2}$ based on  the diagonal positive definite matrix $S^{2}$, where $S$ is the matrix of singular values from the SVD of $H$. 
In terms of this inner product the element $\rho$ can be reinterpreted as $\rho_{ij}=\langle U_{i\ast}, U_{j\ast}\rangle_{S^2} = U_{i\ast} S^2 U_{j\ast}^\top$. 
Hence, we see that each $U_{i\ast}$ belongs to an ellipsoid in $\mathbb{R}^{r}$ determined by the equation $\langle u, u\rangle_{S^2} = 1$.

The singular value decomposition of $H_\star$ -- closely related to the spectral decomposition $\rho = H_\star H^\top_\star=\Sigma\Sigma^\top$ -- further provides us with a simple estimate of the effective dimension of the embedding.
We define the effective dimension as:
\begin{equation}
 d_{\rm eff}(\epsilon) = \min\Big\{1\leq r \leq d_0\Big|\sum_{\ell =1}^{r}s^2_\ell(\rho)>(1-\epsilon) \times \Tr(\rho)\Big\}.\label{eq:deff}
 \end{equation}
 In this paper, we choose $\epsilon = 0.01$.
Intuitively, the above definition discounts eigen-coordinates which contribute less than $1\%$ to the total variation in the embedding coordinates.
The specific value of $\epsilon$ may here be interpreted as a `significance' value, which can be chosen by the analyst.

\begin{myremark}[Eigenvalue thresholding]
    If $H_\star$ is a solution of the embedding problem, then the effective embedding $H_\text{eff}$  corresponds to a truncation of the  SVD of $H_\star$ to its $d_{\rm eff}$ largest singular values. 
    The nuclear norms of the invariants $\rho =H_\star H^\top_\star$ and $\rho_{\rm eff} =H_{\rm eff}H_{\rm eff}^\top$ are related by
$
\|\rho-\rho_{\rm eff}\|_{\star}\leq \epsilon \|\rho\|_{\star},
$
where $\|\rho\|_{\star}= n$.
As we choose here $\epsilon = 0.01$, this means that the relative error (as measured by the nuclear norm) between the effective embedding and the optimal embedded is less than $1\%$.
\end{myremark}
\begin{myremark}[Orientation ambiguity]\label{rem:Orientation}
The  embedding coordinates given in Definition~\ref{Def:Embedding} are ordered according to the magnitude of the singular values. 
Provided that all singular values are distinct, this implies that there is no ambiguity in terms of the ordering of the coordinates.
There remains one source of ambiguity, however, namely, a direction change of a coordinate axis. 
%For instance in the case of the ellipsoidal embedding, an equivalent embedding can be defined by
%\begin{equation*}
%i\mapsto \begin{bmatrix}
%\zeta_1  U_{i1} & \zeta_2  U_{i2} &\dots & \zeta_r U_{ir}
%\end{bmatrix},\quad 1\leq i\leq n, 
%\end{equation*}
%with $\zeta_1,\dots,\zeta_r \in \{-1,1\}$. 
\end{myremark}

Note that the ambiguity discussed in Remark~\ref{rem:Orientation} is also encountered in spectral embeddings and is essentially unavoidable due to the symmetry of the problem.
In the context of spectral embeddings, the above ambiguity corresponds to the fact that any (unit) eigenvector is only defined up to a phase.
In practice, these issues of non-uniqueness can be ignored for most applications:
typically, we are interested in the relative positions of the nodes, rather than their absolute positions in the embedding space.

\subsection{Computing ellipsoidal embeddings}
The embedding problem~\eqref{eq:relax0} can be solved in a number of different ways.
In this work we employ a generalized power method, as described in Algorithm~\ref{Alg1}, that is inspired from~\cite{Journee:2010:GPM:1756006.1756021,Boumal,ChenCandes,MasterThesis}.

To see how the method works, first notice that the diagonal of the descriptor matrix, here illustrated by the modularity matrix $Q$ does not influence the solution of~\eqref{eq:relax0} but merely shifts the objective value by a constant. 
To see this, observe that for any diagonal matrix $D$, we have $\Tr(H^\top (Q - D) H) = \Tr(H^\top Q H) - \sum_i D_{ii}$.
Since in general $Q$ may be indefinite, we thus employ a preprocessing step to make any descriptor matrix positive definite, by simply shifting the spectrum of $Q$ with a diagonal matrix.
Specifically we apply the transformation $ Q\mapsto  K =Q+\diag(v)$.
Here $v$ is chosen such $K$ is strictly diagonally dominant and therefore $K\succ 0$. 
Specifically, here we define
\begin{equation}
K_{ij}=
\begin{cases}
 Q_{ij} & \text{ if } i\neq j\\
 1+\sum_{k\neq i}|Q_{ik}| & \text{ if } i=j
 \end{cases}.\label{eq:Preprocess}
\end{equation}

\begin{algorithm}[tb!]
\caption{Generalized power method \cite{Journee:2010:GPM:1756006.1756021,Boumal,ChenCandes}\label{Alg1}}
\begin{algorithmic}[1]
\Require Symmetric positive definite matrix $K\in \mathbb{R}^{n\times n}$;
and an initial embedding $x_0\in \mathbb{R}^{n\times d_0}$ such that $\Pi(x_0) = x_0$; see \eqref{eq:P}. Fix $0<\text{tol}<1$, and set $m=0$.
\State {\bf do}
\State $m = m+1$
\State $x_m = \Pi(Kx_{m-1})$
\State $o_{m} = \Tr(x_{m}^{\top} K x_{m})$
\State \textbf{while} ($m\le 1$ or $|o_{m} -o_{m-1}|/o_{m-1} \ge {\rm tol}$)
\end{algorithmic}
\end{algorithm}

Using this shifted descriptor matrix, Algorithm~\ref{Alg1} now solves the embedding problem iteratively, starting from an initially feasible solution.
Inspired from~\cite{ChenCandes}, here the initialization is obtained by selecting uniformly at random $d_0$ columns of $K$ and then by projecting the resulting $n\times d_0$ matrix on the product of spheres $ (\mathbb{S}^{d_0-1})^{n}$ by using the projection operator $\Pi$, which maps any matrix $H\in \mathbb{R}^{n\times d_0}$ such that $H_{i\ast}\neq 0$ for all $i=1,\dots,n$ onto a spherical embedding matrix, by normalizing the rows of $H$, namely
\begin{equation}
\Pi(H)_{i\ast} = H_{i\ast}/\|H_{i\ast}\|_2,\text{ for all } 1\leq i\leq n,\label{eq:P}
\end{equation}
where we recall that $H_{i\ast}$ is the $i$-row of $H$.

Starting from a feasible initial condition, we alternate between applying our (shifted) descriptor matrix to the current embedding, and then project the result back again onto a hypersphere.
Intuitively, the repeated multiplication of $K$ aligns the current iterate with the dominant subspace of $K$ akin to a power-method for eigensolvers, thereby increasing the objective value, while the projection step acts as a normalization step and ensures that we maintain feasibility. 
The algorithm is stopped when the relative variation of consecutive objectives does not exceed a particular tolerance.
In this paper we choose the tolerance $\mathrm{tol} = 1\mathrm{e}{-08}$ unless stated otherwise.

Mathematically, Proposition~\ref{PropositionMonotoneSimplified} gives a lower bound on the improvement between successive objectives values.
\begin{myprop}\label{PropositionMonotoneSimplified}
Let $K\in \mathbb{R}^{n\times n}$  symmetric such that $|K_{ii}|> 1+\sum_{k\neq i}|K_{ik}|$ for all $1\leq i\leq n$. Let the objective function be $f(x) = \Tr(x^\top K x)$. Then, the sequence of objectives for the iteration $x_{m+1} = \Pi(Kx_{m})$  satisfies
\[
 f(x_{m+1})-f(x_m)> \|x_{m+1}-x_m\|_2^{2}
\]
for all $m\geq 0$.
\end{myprop}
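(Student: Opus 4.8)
The plan is to expand the one-step increase of the objective into a quadratic part plus a cross term, and then to exploit the precise geometry of the row-normalisation $\Pi$ to control each part separately. Writing $w = x_{m+1}-x_m$ and using $x_{m+1}=x_m+w$ together with the symmetry of $K$, I would first record the exact identity
\[
f(x_{m+1})-f(x_m)=\Tr\!\big(w^\top K w\big)+2\,\Tr\!\big(x_m^\top K w\big).
\]
The first summand is nonnegative: the hypothesis $|K_{ii}|>1+\sum_{k\neq i}|K_{ik}|$ makes $K$ strictly diagonally dominant, so (with the positive diagonal produced by the shift~\eqref{eq:Preprocess}) the Gershgorin circle theorem gives $K\succ 0$ and hence $\Tr(w^\top K w)\ge 0$. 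The whole inequality therefore follows once I show that the cross term alone already dominates $\|w\|_2^2$.

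The key step is to evaluate the cross term exactly using that $x_{m+1}=\Pi(Kx_m)$, i.e.\ each row satisfies $(x_{m+1})_{i\ast}=(Kx_m)_{i\ast}/\|(Kx_m)_{i\ast}\|_2$ while $\|(x_m)_{i\ast}\|_2=1$. Since $K$ is symmetric, $\Tr(x_m^\top K x_{m+1})=\sum_i\langle (Kx_m)_{i\ast},(x_{m+1})_{i\ast}\rangle=\sum_i\|(Kx_m)_{i\ast}\|_2$, whereas $\Tr(x_m^\top K x_m)=\sum_i\|(Kx_m)_{i\ast}\|_2\,\langle (x_{m+1})_{i\ast},(x_m)_{i\ast}\rangle$. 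Subtracting and using the identity $1-\langle a,b\rangle=\tfrac12\|a-b\|_2^2$ valid for unit vectors $a,b$ yields the clean formula
\[
2\,\Tr\!\big(x_m^\top K w\big)=\sum_{i=1}^n\|(Kx_m)_{i\ast}\|_2\,\|w_{i\ast}\|_2^2 .
\]

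It then remains to bound each coefficient $\|(Kx_m)_{i\ast}\|_2$ from below by $1$, and this is exactly where the extra $+1$ in the diagonal-dominance hypothesis is used. Projecting onto the unit row $(x_m)_{i\ast}$ and applying Cauchy--Schwarz together with $|\langle (x_m)_{j\ast},(x_m)_{i\ast}\rangle|\le 1$ gives
\[
\|(Kx_m)_{i\ast}\|_2\ \ge\ \langle (Kx_m)_{i\ast},(x_m)_{i\ast}\rangle\ \ge\ K_{ii}-\sum_{j\neq i}|K_{ij}|\ >\ 1 .
\]
Combining the three displays, $2\Tr(x_m^\top K w)>\sum_i\|w_{i\ast}\|_2^2=\|w\|_2^2$ whenever $w\neq 0$, and adding the nonnegative term $\Tr(w^\top K w)$ delivers the claim $f(x_{m+1})-f(x_m)>\|x_{m+1}-x_m\|_2^2$. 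The only subtlety I expect is the degenerate fixed-point case $w=0$, where both sides vanish and the strict inequality must be read as applying to genuinely progressing iterates (the regime in which the algorithm runs before the stopping test triggers); the main technical content is really the cross-term identity and the row-wise lower bound $\|(Kx_m)_{i\ast}\|_2>1$, for which the sharp use of the hypothesis is essential.
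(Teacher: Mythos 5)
Your proof is correct, and it takes a genuinely different route from the paper's. The paper establishes the (more general) Proposition~\ref{PropositionMonotone} via convex-analysis machinery: it invokes the projection inequality for convex sets (Lemma~\ref{Lemma:ProjConvex}) with $C=(\mathbb{B}^{d_0})^n$, where Lemma~\ref{Lemma:DiagDomin} (diagonal dominance implies $\|(Kx_m)_{i\ast}\|_2>1$) is needed precisely so that row normalization coincides with the metric projection onto the product of balls; this yields $\|x_{m+1}-x_m\|_2^2\le 2\Tr\bigl((x_{m+1}-x_m)^\top Kx_m\bigr)=2\Delta(x_m)$, and strictness is then obtained from the identity $2\Delta(x_m)=f(x_{m+1})-f(x_m)-f(x_{m+1}-x_m)$ together with $f(x_{m+1}-x_m)>0$, i.e., strict positive definiteness of $K$. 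You instead start from the same expansion but evaluate the cross term \emph{exactly}: writing $w=x_{m+1}-x_m$,
\[
2\,\Tr\bigl(x_m^\top K w\bigr)=\sum_{i=1}^n\|(Kx_m)_{i\ast}\|_2\,\|w_{i\ast}\|_2^2 ,
\]
via the unit-vector identity $1-\langle a,b\rangle=\tfrac12\|a-b\|_2^2$, and you draw the strict inequality from the row-wise bound $\|(Kx_m)_{i\ast}\|_2>1$, needing only $\Tr(w^\top Kw)\ge 0$ rather than strict positivity. Your argument is more elementary (no projection lemma) and sharper: the weighted identity recovers the paper's inequality $\|w\|_2^2\le 2\Delta(x_m)$ as an immediate corollary, and it makes transparent exactly where the ``$+1$'' in the diagonal-dominance hypothesis is consumed; the paper's route, in exchange, is more modular --- the projection lemma works for any convex constraint set, and the intermediate quantity $\Delta(x_m)$ of \eqref{eq:Delta} is reused in the corollaries on convergence rates. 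Two caveats, both shared with the paper rather than specific to you: (i) strictness fails at a fixed point $x_{m+1}=x_m$, where both sides vanish --- you flag this explicitly, and the paper's proof has the same gap when it asserts $f(x)>0$ ``for all $x$''; and (ii) your lower bound uses $K_{ii}$ rather than $|K_{ii}|$, so, like the paper's appeal to $K\succ 0$, it implicitly requires the positive diagonal guaranteed by the preprocessing \eqref{eq:Preprocess}/\eqref{eq:Kpreprocess}; with a genuinely negative diagonal the proposition as literally stated would in fact be false, so this is a defect of the statement, not of your proof.
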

The proof of a more general version of Proposition~\ref{PropositionMonotoneSimplified} can be found in Appendix~\ref{appendix:GPM}.

While Algorithm~\ref{Alg1} provides us with a practical algorithm to solve our embedding problem, in order to speed up the optimization, we propose Algorithm~\ref{Alg2}, which includes a `momentum' term to accelerate the iterations~\cite{Nesterov1983}.
The advantage of this preprocessing for the convergence of the Generalized Power Method (GPM) (Algorithm~\ref{Alg1}) as well as a more detailed theoretical analysis of both algorithms is discussed in the Appendix~\ref{appendix:GPM}.

\begin{algorithm}[tb!]
\caption{Generalized power method with momentum\label{Alg2}}
\begin{algorithmic}[1]
\Require Symmetric positive definite matrix $K\in \mathbb{R}^{n\times n}$;
and an initial $x_0\in \mathbb{R}^{n\times d_0}$ such that $\Pi(x_0) = x_0$. Initialize $y_0 = Kx_0$.
Fix $ 0<{\rm tol}<1$ and set $n=0$.\\
Let $r_n$ be defined as $r_n := (n-1)/(n+2)$ for $n\in\{1,2,\dots\}$.  
\State {\bf do}
\State $n=n+1$
\State $o_{n-1} = \Tr(y_{n-1}^{\top} x_{n-1})$,
\State  $y_{n} =K x_{n-1} $,
\State  $x_{n} = \Pi( y_{n}+r_n (y_{n} - y_{n-1}))$,
\State \textbf{while} ($n\le 1$ or $|o_{n-1} -o_{n-2}|/o_{n-2} \ge {\rm tol}$)
\end{algorithmic}
\end{algorithm} 

\subsection{Effective embedding dimension}
As announced above, the embedding dimension is often low.
To gain some further insight into this empirical fact, let us introduce a closely related SDP:
\begin{equation}
\maximize\Tr(\rho K ) \text{ subject to } \rho\succeq 0 \text{ and }\rho_{ii} = 1 \text{ for all } 1\leq i\leq n\label{eq:SDP},
\end{equation}
where we defined the square \emph{psd} matrix $\rho = HH^\top$.

Although we will not numerically solve this SDP, it can be shown that a solution of the first order optimality condition of the embedding problem~\eqref{eq:relax0} also satisfies the complementary slackness condition of~\eqref{eq:SDP}; see Appendix~\ref{appendix:GPM} for more details.
Indeed, under certain circumstances the maximum of both problems correspond~\cite{Boumal:2016}, i.e., the non-convex embedding problem~\eqref{eq:relax0} can be effectively solved (up to rotations) by the convex program~\eqref{eq:SDP}.
We summarize these results in Proposition~\ref{Prop:Rank}.
To write our results compactly, here we use $\ddiag(M)$ to denote the diagonal matrix obtained by replacing all off-diagonal elements of $M$ by zero.
\begin{myprop}[Equivalence with a nuclear norm minimization]\label{Prop:Rank}
Let $K\in \mathbb{R}^{n\times n}$ be a \emph{psd} matrix with a maximal eigenvalue strictly smaller than $\lambda>0$ and let $\Sigma \in \mathbb{R}^{n\times n}$ be the invertible matrix with orthogonal rows such that $\lambda\mathbb{I}-K= \Sigma\Sigma^\top$. Then, the optimal solution $X^\star$ of
\begin{equation*}
\minimize_{X\succeq 0}\|X\|_{\star} ,\text{ subject to}\ \ddiag\Big( (\Sigma^{-1})^\top X\Sigma^{-1}\Big) = \ddiag(K),
\end{equation*}
has the same rank as the optimal solution of~(\ref{eq:relax0}) $\rho^\star$ and is given by $X^\star = \Sigma^\top \rho^\star\Sigma$. 
\end{myprop}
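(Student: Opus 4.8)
The plan is to reduce the nuclear norm program to the semidefinite program \eqref{eq:SDP} through an explicit, rank-preserving change of variables, and then to identify its optimizer with $\rho^\star$ by invoking the tightness of the relaxation announced just before the statement.

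First I would set up the change of variables. Since $K$ is psd with largest eigenvalue strictly below $\lambda$, the matrix $\lambda\mathbb{I}-K$ is positive definite and admits an invertible factor $\Sigma$ with $\lambda\mathbb{I}-K=\Sigma\Sigma^\top$. Consider the linear map $\rho\mapsto X:=\Sigma^\top\rho\Sigma$, whose inverse is $X\mapsto(\Sigma^{-1})^\top X\Sigma^{-1}$. Because $\Sigma$ is invertible this is a bijection on symmetric matrices that preserves positive semidefiniteness, so $X\succeq0$ iff $\rho\succeq0$, and, crucially, preserves rank, so $\rank(X)=\rank(\rho)$. Hence once I show that the optimizers correspond under this map, the equality of ranks is automatic.

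Next I would translate the objective and the constraints. Substituting $K=\lambda\mathbb{I}-\Sigma\Sigma^\top$ gives $\Tr(\rho K)=\lambda\Tr(\rho)-\Tr(\Sigma^\top\rho\Sigma)=\lambda\Tr(\rho)-\Tr(X)$, and since $X\succeq0$ we have $\Tr(X)=\|X\|_\star$. The diagonal constraint transcribes verbatim: writing $\rho=(\Sigma^{-1})^\top X\Sigma^{-1}$, the condition $\ddiag\big((\Sigma^{-1})^\top X\Sigma^{-1}\big)=\ddiag(K)$ is exactly $\ddiag(\rho)=\ddiag(K)$, which pins down $\Tr(\rho)$ to the constant $\Tr(K)$. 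Consequently maximizing $\Tr(\rho K)$ over $\rho\succeq0$ with prescribed diagonal is equivalent --- up to an additive constant and a sign --- to minimizing $\|X\|_\star$ over $X\succeq0$ subject to the stated constraint. This shows the two problems have the same optimizers under $X=\Sigma^\top\rho\Sigma$; the computation is routine, the only care needed being the standard identity $\|X\|_\star=\Tr(X)$ for psd $X$ and the bookkeeping that the diagonal constraint freezes the otherwise variable term $\lambda\Tr(\rho)$.

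The remaining, and genuinely substantive, step is to identify the SDP optimizer with the invariant $\rho^\star=H_\star H_\star^\top$ of the embedding problem \eqref{eq:relax0}, so that $X^\star=\Sigma^\top\rho^\star\Sigma$. Here I would use the fact recalled before the statement that a solution of the first-order optimality conditions of the non-convex problem \eqref{eq:relax0} also satisfies the complementary-slackness conditions of the convex program \eqref{eq:SDP}; under the strict diagonal dominance guaranteed by the preprocessing \eqref{eq:Preprocess} (the same hypothesis driving Proposition~\ref{PropositionMonotoneSimplified}), this certifies that $\rho^\star$ is globally optimal for \eqref{eq:SDP}, following \cite{Boumal:2016}. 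The main obstacle is precisely this tightness of the relaxation: the change of variables is elementary, but building the dual certificate that upgrades a first-order stationary point of the non-convex embedding to a global optimum of the convex SDP --- and hence guarantees that the low-rank $\rho^\star$ (not some higher-rank competitor) is the minimizer whose image $X^\star$ solves the nuclear norm problem --- is where the spectral-gap and diagonal-dominance assumptions carry the weight.
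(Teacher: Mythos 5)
The paper itself contains no proof of Proposition~\ref{Prop:Rank}: it only remarks that the statement is ``completely analogous to Proposition 3.1 of \cite{SDPEmbedding} where a proof is given.'' The proof in that reference is exactly your change-of-variables core --- the congruence $X=\Sigma^\top\rho\Sigma$ preserves positive semidefiniteness and rank, $\Tr(\rho K)=\lambda\Tr(\rho)-\Tr(X)$ with $\Tr(X)=\|X\|_\star$ for $X\succeq 0$, and the diagonal constraint freezes $\Tr(\rho)$, so maximizing $\Tr(\rho K)$ and minimizing $\|X\|_\star$ are the same problem. On that part your proposal and the paper's (cited) argument coincide.

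There is, however, a genuine gap in how you close the argument. Your own computation shows that the nuclear-norm program is equivalent to maximizing $\Tr(\rho K)$ over $\rho\succeq 0$ subject to $\ddiag(\rho)=\ddiag(K)$. You then identify the optimizer of this program with $\rho^\star=H_\star H_\star^\top$, where $H_\star$ solves~\eqref{eq:relax0}. But the constraint of~\eqref{eq:relax0} forces $\ddiag(\rho^\star)=\mathbb{I}$, whereas the preprocessing~\eqref{eq:Preprocess} gives $K_{ii}=1+\sum_{k\neq i}|Q_{ik}|>1$. Hence $\rho^\star$ is not even feasible for the SDP you derived, and $X^\star=\Sigma^\top\rho^\star\Sigma$ violates the stated constraint $\ddiag\big((\Sigma^{-1})^\top X\Sigma^{-1}\big)=\ddiag(K)$. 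The statement only becomes consistent if $\ddiag(K)$ is read as $\mathbb{I}$, so that the derived SDP is precisely~\eqref{eq:SDP} with its unit-diagonal constraint; this is evidently a carry-over from the kernel setting of \cite{SDPEmbedding}, where the SDP constraint is that the embedding Gram matrix matches $\ddiag(K)$. A careful proof must notice and resolve this; asserting that the constraint ``transcribes verbatim'' and then invoking~\eqref{eq:SDP} silently conflates two different programs. A secondary inaccuracy: the strict diagonal dominance behind Proposition~\ref{PropositionMonotoneSimplified} does not ``carry the weight'' of certifying that $\rho^\star$ solves~\eqref{eq:SDP}; it governs the monotonicity of the power iteration, while the tightness of the relaxation is a Burer--Monteiro-type statement requiring conditions of a different kind (on $d_0$ relative to the number of constraints) in \cite{Boumal:2016}. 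Neither you nor the paper proves that step --- the paper also only invokes it ``under certain circumstances'' --- so there you are on equal footing; the feasibility mismatch is the part you need to fix.
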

Proposition~\ref{Prop:Rank} is completely analogous to Proposition 3.1 of \cite{SDPEmbedding} where a proof is given.
Observe that in view of Proposition~\ref{Prop:Rank} the problem~\eqref{eq:SDP}, and thus our related embedding problem~\eqref{eq:relax0} is equivalent to a nuclear norm minimization subject to linear constraints which promotes a low-rank solution and thus a low embedding dimension.
Indeed, the minimization of the nuclear norm is a relaxation of the minimization of the rank of a matrix.

\subsection{Using other descriptor matrices to derive ellipsoidal embeddings}

Our description of the embedding so far has used the modularity matrix as our primary example of a feature matrix.
While modularity is certainly one of the most well-known feature matrices related to network analysis, there is nothing about our problem formulation that forces us to stick to modularity.
Indeed, is worth remarking again that we may also us alternative matrices to derive alternative embeddings with different interpretations.
The basic requirements on the descriptor matrix $M$ are that it is symmetric and contains both positive and negative entries.
For instance, the modularity matrix is designed to detect an assortative group structure in a network, i.e., it is expected to emphasize groups of nodes which are densely connected with each other. 
However, if we are interested in disassortative structures (e.g., bipartite structure), we may want to consider a descriptor based on the squared adjacency matrix $A^2$. For instance, we could simply consider a modularity matrix derived from the network with adjacency $A^2$.
Another choice for a descriptor matrix is given by a form of Laplacian matrix.
Let $D = \diag(\rmd)$ be the diagonal degree matrix and let $\pi_i = \rmd_i/\sum_j \rmd_j$. 
An embedding related to the normalized Laplacian matrix may then be defined via the following descriptor matrix\footnote{Strictly speaking, the matrix $\mathcal L$ and the normalized Laplacian $L_N$ are different matrices. However, note that $\mathcal L = I - L_N -\sqrt{\pi}\sqrt{\pi}^{\top}$ simply corresponds to a shifted version of $L_N$ with a rank-1 correction term.}
\begin{equation}
\mathcal L =D^{-1/2}AD^{-1/2}-\sqrt{\pi}\sqrt{\pi}^{\top}.\label{eq:NormLap}
\end{equation}
A final choice for a descriptor matrix is given by autocovariance matrix of a random walk on the graph, as it features in the Markov stability framework~\cite{Delvenne12755,Schaub2012b,Delvenne2013}, which allows to sweep the graph structures at different scales.
More generally, we may consider descriptor matrices derived from more general dynamical (covariance) kernels~\cite{Schaub2018a}, in order to capture certain dynamical features of the problem at hand.
We postpone the study of these alternatives for a further work.

\section{Case Study: Ellipsoidal embeddings for graph partitioning\label{sec:Community}}
One task in network analysis that has enjoyed tremendous interest over the past decades is community detection -- the task of partitioning a network into groups of nodes according to some pre-specified criterion.
In the following, we show how we can use our spherical embedding to perform community detection for networks.
While the resulting algorithm may be seen as an independent non-parametric community detection (meta-)heuristics (depending on the chosen descriptor matrix) in its own right, our goal here is primarily to illustrate the utility of the embedding using this task as a case study.

For simplicity, we will use again the modularity matrix as descriptor matrix of our embedding here.
Note that we do not aim to optimize modularity here directly, nor do we advocate modularity optimization as the method of choice for community detection.
However, choosing a modularity based embedding enables us to relate the resulting clustering to the large literature of methods for modularity optimization and thus provides some form of external validation for the utility of the embedding.
For comparison we therefore also computed network clusterings according to the Louvain method~\cite{Louvain}, which is known to perform well for modularity optimization on large graphs.

\subsection{Embed-and-partition}
In order to find clusters in the embedding, we take inspiration from the well-known $k$-means algorithm and the vector partitioning methods proposed in~\cite{ZhangNewman,LiuBarahona}.
Let $H\in \{0,1\}^{n\times k}$ be a binary membership matrix associated to a partition of the graph with $k$ clusters, i.e., each node is in one and only one cluster, and $H_{ic}=1$ if  $i$ is in the cluster $c$ and zero otherwise. Also, we denote by $c_i\in\{1,\dots, k\}$ the cluster index of $i\in\{1,\dots, n\}$.
Then, we aim to optimize the following objective
\begin{equation}
    \tilde{z} = \Tr\left(H^\top Z H\right)  = \sum_{\ell=1}^{k} \Big\|\sum_{\{i|c_{i}=\ell\}} U_{i\ast}\Big\|_2^{2},\label{eq:ObjectiveVectorPartition}
\end{equation}
where $Z = UU^\top$ in the case of ellipsoidal embedding, see Definition~\ref{Def:Embedding}, and where $H$ is a binary membership matrix.
Like most partitioning problems, the exact maximization of the latter objective function over all binary membership matrices is performed by a greedy approach; see Algorithm~\ref{Alg:VecPart}.
In the spirit of vector partitioning, each of the sum above is associated to a centroid vector $R_\ell = \sum_{\{i|c_{i}=\ell\}} U_{i\ast}.$
Following~\cite{ZhangNewman}, moving node $i$ from community $\ell$ to community $\ell'$  yields the following change in the objective: $\Delta \tilde{z} =  2 U_{i\ast}(R_{\ell'}-R_\ell)^\top-2$. 
This means that if $ U_{i\ast} R_{\ell'}^\top> U_{i\ast} R_\ell^\top +1$, the objective is improved by changing node $i$ from community $\ell$ to community $\ell'$.
This remark motivates the iteration given in Algorithm~\ref{Alg:VecPart}.

In view of these remarks, the partitioning algorithm proceeds as follows.
We first initialize the algorithm according to:
\begin{enumerate}
    \item Draw $k$ centroid vectors $R_1,\dots, R_k$ without replacement from the set of position vectors  $\mathcal U = \{U_{1\ast},\dots,U_{n\ast}\}$ according to the distribution $\pi$.
\item For all $1\leq i\leq n$, calculate 
$c_i^{(0)} \in \argmax_{1\leq\ell\leq k} U_{i\ast} R_\ell^\top$.
\item For all $1\leq\ell\leq k$, compute $
R_\ell^{(0)} = \sum_{\{i|c_{i}^{(0)}=\ell\}} U_{i\ast}$.
\end{enumerate}
Here the probability distribution $\pi$ for the initial sampling is chosen to be proportional to the degree of the node $\pi_i = \rmd_i/\sum_{i}\rmd_i$.
We then iterate over the cluster-assignments and centroid updates in an alternating fashion as outlined in Algorithm~\ref{Alg:VecPart}.
\begin{algorithm}[tb!]
\caption{Vector partitioning~\cite{ZhangNewman} \label{Alg:VecPart}}
\begin{algorithmic}[1]
\Require embedding $\{U_{i\ast}\}_{i=1,\dots,n}$, initial partition $\{c_i^{(0)}\}_{i=1,\dots, n}$ and centroid vectors $\{R_\ell^{(0)}\}_{\ell =1,\dots, k}$.
 \State {\bf do} 
\State Find $c_{i}^{(n)}\in \argmax_{1\leq \ell\leq k}U_{i\ast} R^{(n-1)\top}_\ell$.
\State  Update $R^{(n)}_\ell = \sum_{\{i|c_{i}^{(n)}=\ell\}} U_{i\ast}$.
\State {\bf while} modularity of the partition $\{c_i^{(n)}\}_{i=1,\dots, n}$ keeps increasing.
\end{algorithmic}
\end{algorithm}
In practice, we update the communities as long as the objective $\Tr\left(H^\top Z H\right)$ of the partition associated to $H$ increases, otherwise we stop. 
%In other words, although the motivation of the algorithm was the optimization of~\eqref{eq:ObjectiveVectorPartition}, we use the iteration in Algorithm~\ref{Alg:VecPart} in order to maximize solely the modularity of the partition.
We notice empirically that if $k$ clusters are initialized at random, then due to the shape of the embedding several clusters will be associated to empty partitions after a few iterations. 
In this case our procedure will yield a number of clusters smaller than or equal to the original supplied upper bound $k$.

As mentioned already above, note that for obtaining partitions with a good modularity value with embed-and-partition in the simulations of this paper, we always define the embedding $U_{i\star}$ of \cref{Def:Embedding} from the untruncated SVD.

% \subsubsection*{Improving the embedding based partitioning via post-processing}\label{sec:Post_process}
% A common feature of graph clustering methods involving an embedding step, e.g. spectral methods, is that the obtained communities are not necessarily connected subgraphs. 
% This may not be a problem in general, if we adopt a suitable broad definition of community~\cite{Schaub2017}.
% To facilitate a comparison to modularity based clustering, which is based on an assortative notion of community, in the following we complement Algorithm~\ref{Alg:VecPart} with a post-processing step, to ensure connected communities in practice.
% \MF{This is not implemented yet}
% To this end we proceed as follows:
% \begin{enumerate}
% \item For each community given by  Algorithm~\ref{Alg:VecPart}, we keep to largest connected component of the associated subgraph that we call cores.
% \item Each of the remaining nodes which has a neighbour in a core is placed in the same community.
%     If a node has multiple neighbors belonging to different cores, the assignment is done to a (uniform) random neighboring core.
%     This process is repeated iteratively until all the nodes are classified.
% \end{enumerate}

%HERE
\begin{figure}[h]
\centering
\begin{tabular}{cc}
\includegraphics[scale = 0.46]{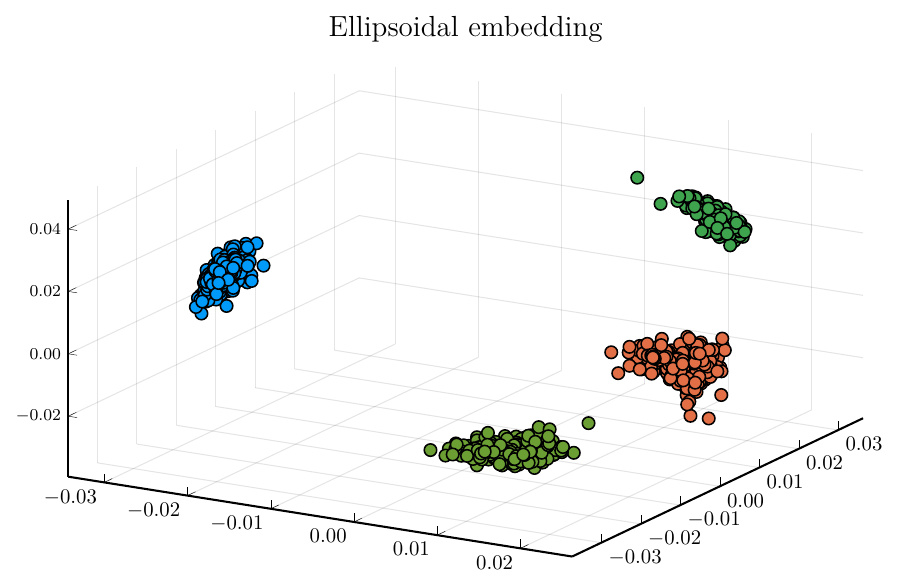}% trim={2.8cm 8.5cm 2.8cm 8.5cm},clip,
&\includegraphics[scale = 0.46]{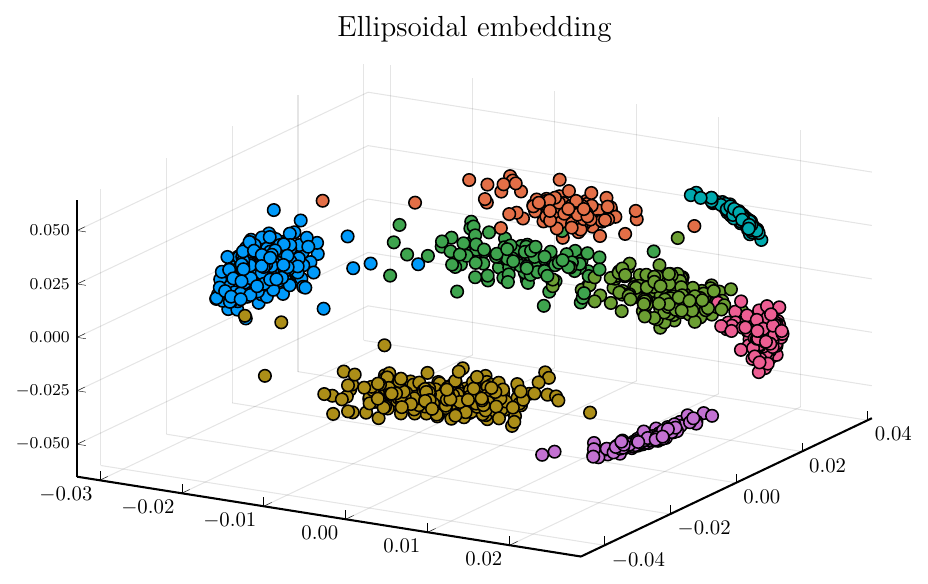}\\
\includegraphics[scale = 0.48]{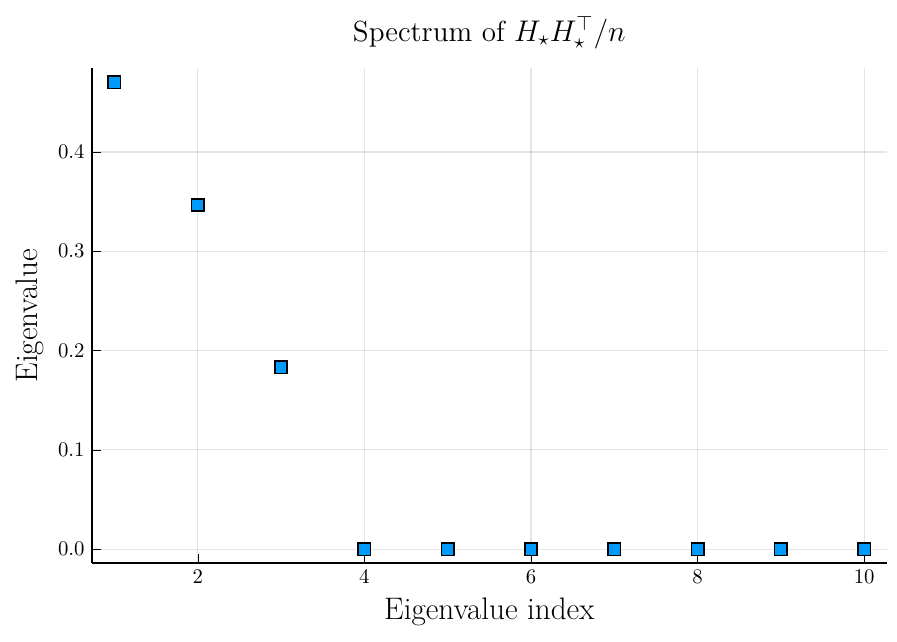}
&\includegraphics[scale = 0.48]{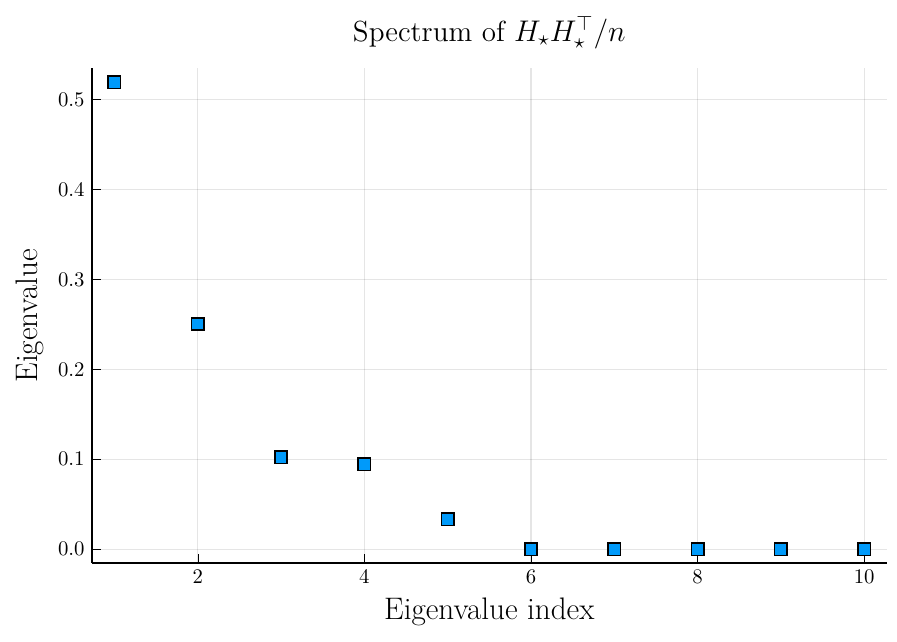}\\
\end{tabular}
\caption{\textbf{Modularity-based ellipsoidal embeddings for graphs with community structure.} A vizualization of the embedding of two LFR benchmark graphs with $2000$ nodes with $d_0=10$; \textsc{LFR1} (left, $4$ planted communities and ${ d}_{\rm eff}=3$) and \textsc{LFR2} (right, $8$ planted communities, a larger mixing parameter and ${ d}_{\rm eff}=5$), see \cref{a:fig_embed_toy} for details. The colors indicate the true community structure. On the bottom, the eigenvalues of $\frac{1}{n}H_\star H^\top_\star$.
Our embed-and-partition retrieves the planted communities in both cases. 
\label{Fig:embed_toy}}
\end{figure}
\subsubsection{Numerical results for embedding based graph partitioning of benchmark graphs}

To perform our synthetic experiments, we created a range of different benchmark graphs using the model of Lancichinetti, Fortunato and Radicchi~(LFR)~\cite{LFBenhmarks}, which simulates graphs with community structures inspired by statistical patterns observed in real-world networks. 
\begin{figure}[tb!]
  \centering
  \includegraphics[scale=0.7]{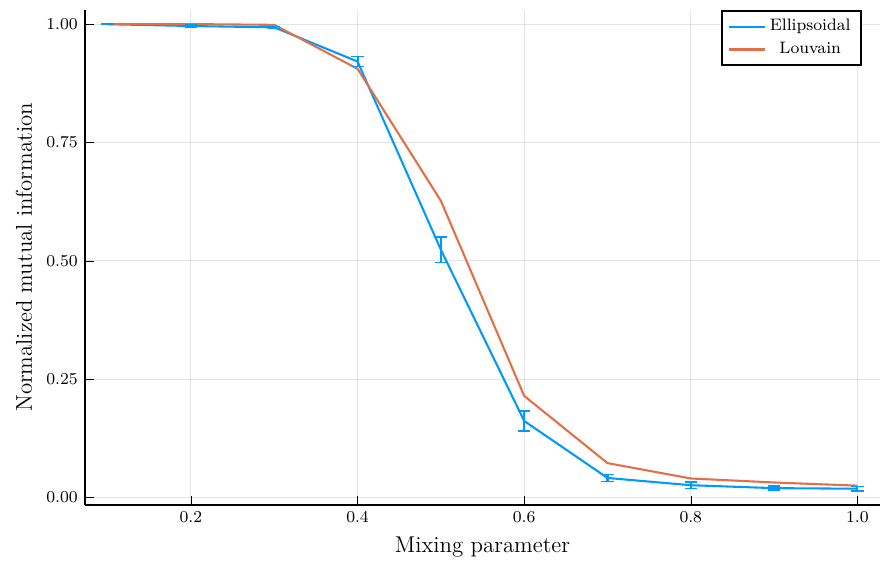}
  \caption{NMI \emph{vs} mixing parameter $\texttt{mu}$ of LFR benchmark graphs with $n=1000$ nodes.
  These graphs were generated with different mixing parameters ranging from $0.1$ to $1$; see \cref{a:nmi_vs_mixing} for the numerical setting.
  An ellipsoidal embedding was computed with $d_0 = 30$ and communities were retrieved thanks to \cref{Alg:VecPart} with $k= 100$ initialized centroids.
  The NMI between the planted and retrieved community structure is here displayed as a function of the mixing parameter.
  The whole procedure was repeated independently $3$ times and  averages as well as standard deviations are reported.
  We refer to \cref{Fig:nmi_vs_mu_ablation} for a study of the sensitivity to the choice of $k$ and $d_0$.\label{Fig:nmi_vs_mu}}
\end{figure}

% Our set of test networks includes networks of different size, different numbers of communities and different mixing parameters between the clusters.
To gain some further intuition of how our method operates in this task, a visualization of the embedding of FLR benchmark graph is shown in Figure~\ref{Fig:embed_toy}. In both cases, the effective dimension of the embedding is indeed small as it may be seen from the spectra at the bottom of Figure~\ref{Fig:embed_toy}, while the planted communities are recovered by our partitioning method.
In a more extensive study, our clustering results are compared in \cref{Fig:nmi_vs_mu} on LFR benchmarks of various mixing parameters with the Louvain method~\cite{Louvain}.
A conclusion that can be drawn from those comparisons is that our method yields competitive partitions in terms of quality.
% and computing time, while sometimes it even outperforms the Louvain method on some benchmarks with a small number of relatively large communities. Notice that computing times are given in order to assess the practicability of the methods although they depend on the type of implementation and software which are used.

\subsubsection{Numerical results for real-world graphs}
Several real networks given in Table~\ref{Table:RealNetworks} were also used to compare partitions obtained with modularity-based ellipsoidal embeddings with the following baselines: the Louvain method, node2vec $+$ k-means and a spectral method based on the modularity matrix followed by vector partitioning.
The results can be found in Table~\ref{fig:ResultsReal}. 
We observe that in the case of those real-world networks, our partitioning method often obtains relatively good modularity values.% with comparatively smaller numbers of communities, when compared the Louvain method, which finds partitions with higher modularity values but also more groups.

\section{Conclusions}\label{sec:Conclusions}
Taking inspiration from spectral relaxations of trace optimization problems, we have proposed a general ellipsoidal embedding algorithm for networks.
We have discussed several connections of this approach to spectral clustering and other methods proposed in the literature and provided a simple, efficient algorithm to compute such an embedding.
We have further shown that our embedding can be utilized for community detection by applying a vector partitioning algorithm in the embedding space derived from the modularity matrix, which may be of independent interest.
Interestingly the computed embedding dimension, which can be selected in an automatic fashion, appears to be indicative of the `structural complexity' of the studies network, and can be used as a lower bound for the amount of clusters present in the network.

There are a number of interesting research directions based on this work worth pursuing in future research.
For instance, it would be interesting to characterize the relationship between the network structure and the optimal embedding dimension in more detail.
In particular, while numerically we have observed that the embedding dimension can serve as a robust proxy for the complexity of the network, it would be interesting to see whether this observation can be formalized.
One possible way forward here would be to study a generative model which could be related to a spherical embedding such as the $\mathbb{S}_1$ model~\cite{Serrano2008}, or random dot-product graphs~\cite{RDPgraphs}.
In this context it would also be insightful to understand the relationship to associated spectral embeddings better, which follow a related, yet distinct paradigm.

From the algorithmic perspective, a new preprocessing method for the Generalized Power Method was proposed in this work as well as a new algorithm: the Generalized Power Method with Momentum. Although we have no proof yet of the convergence for this new algorithm, it was showed empirically to converge markedly faster. 
Especially since the proof techniques of the accelerated gradient methods~\cite{Nesterov1983} do not seem to be applicable in our context, we think it is of theoretical interest to study its convergence properties in more detail.

Finally, there are some interesting interpretations of the here proposed method, as discussed in Section~\ref{sec:additional_interpretations}, which will be worth exploring further.
In particular, the connection of the ellipsoidal embeddings to quantum dynamics (density matrices) suggests further investigation.
There has been significant interest recently in quantum dynamics such as random walks on network~\cite{kempe2003quantum} and it would be of interest, e.g., to explore `quantum descriptor matrices' of graphs and their resulting embeddings.

\paragraph{Acknowledgments}
M.F. acknowledges support from ERC grant BLACKJACK (ERC-2019-STG-851866, PI: R. Bardenet).
M.T.S. acknowledges funding from the European Union's Horizon 2020 research and innovation programme under the Marie Sklodowska-Curie grant agreement No 702410 and the Ministry of Culture and Science (MKW) of the German State of North Rhine-Westphalia ("NRW Rückkehrprogramm"). J.-C. D. acknowledges support from the Research Project PDR TheCirco of the National Fund for Scientific Research (F.R.S.-FNRS) of Belgium.

\appendix

\section{Optimization problem and preprocessing\label{sec:AppendixMath}}
The purpose of this section is to state and prove the convergence properties of the Generalized Power Method in the context of the problem addressed in this paper.
The choice of pre-processing used is also discussed. 

Let  $Q\in \mathbb{R}^{n\times n}$ be a symmetric matrix and an integer $d_0>1$. We also introduce the notation $(\mathbb{S}^{d_0-1})^n$ and $(\mathbb{B}^{d_0})^n$ for the Cartesian product of unit sphere and closed unit balls, respectively, viewed as being embedded in $\mathbb{R}^{n\times d_0}$. Let $x_{i\ast}$ denote the $i$-th row of $x\in \mathbb{R}^{n\times d_0}$.
We aim to solve
\begin{equation}
\max_{x\in(\mathbb{S}^{d_0-1})^n} \Tr(x^\top Q x),\label{Optim}
\end{equation}
where we identify $x\in(\mathbb{S}^{d_0-1})^n$ with $x\in\mathbb{R}^{n\times d_0}$ such that $\| x_{i\ast}\|_2 =1$  for all $i\in\{1,\dots, n\}$.
Since each row of $x$ is of unit $2$-norm, changing the diagonal elements of $Q$ will yield a problem with the same optimal solutions. Namely, the optimal objective will only be shifted by a constant. In view of this remark,  we define the following objective function 
$$f(x) = \Tr(x^\top K x),$$
where 
\begin{equation}
K_{ij}=
\begin{cases}
 Q_{ij} & \text{ if } i\neq j\\
 1+\epsilon + \sum_{k\neq i}|Q_{ik}| & \text{ if } i=j
 \end{cases},\label{eq:Kpreprocess}
\end{equation}
for some $\epsilon>0$.
The result is that $K$ is strictly diagonally dominant with a positive diagonal and therefore $K$ is strictly positive definite. Hence, we have
$
  f(x) >0,
$
for all $x\in \mathbb{R}^{n\times d_0}$.
As a consequence, $\Tr(x^\top K y)$ defines an inner product in $\mathbb{R}^{n\times d_0}$. 
Furthermore, multiplying a vector on the product of unit spheres $(\mathbb{S}^{d_0-1})^n$ by $K$ yields a vector lying in the complement of the closed unit balls $(\mathbb{B}^{d_0 \complement})^n$, as it is stated in Lemma \ref{Lemma:DiagDomin}.
\begin{mylemma}[Effect of diagonal dominance]\label{Lemma:DiagDomin}
Let $x\in (\mathbb{S}^{d_0-1})^n$ and let $K\in \mathbb{R}^{n\times n}$ be a symmetric matrix. 
If the diagonal elements satisfy $|K_{ii}| > 1+\sum_{k\neq i}|K_{ik}|$ for all $1\leq i\leq n$, then we have
$\|(Kx)_{i\ast} \|_2 > 1$ for all $1\leq i\leq n$.
\end{mylemma}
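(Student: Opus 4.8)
The plan is to prove the bound row by row, isolating the dominant diagonal contribution. First I would expand the $i$-th row of $Kx$ in terms of the rows of $x$, which are unit vectors by feasibility. Since $(Kx)_{i\ast} = \sum_{j=1}^n K_{ij} x_{j\ast}$, I split off the diagonal term and write
\[
(Kx)_{i\ast} = K_{ii}\, x_{i\ast} + \sum_{j\neq i} K_{ij}\, x_{j\ast}.
\]
The essential point is that the coefficient $K_{ii}$ of the unit vector $x_{i\ast}$ dominates the combined weight of all the off-diagonal terms, so the diagonal part cannot be cancelled.

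Next I would apply the reverse triangle inequality to separate the large diagonal term from the off-diagonal remainder, using $\|x_{i\ast}\|_2 = 1$:
\[
\|(Kx)_{i\ast}\|_2 \;\geq\; |K_{ii}|\,\|x_{i\ast}\|_2 \;-\; \Big\|\sum_{j\neq i} K_{ij}\, x_{j\ast}\Big\|_2 \;=\; |K_{ii}| \;-\; \Big\|\sum_{j\neq i} K_{ij}\, x_{j\ast}\Big\|_2.
\]
Then I bound the off-diagonal part by the ordinary triangle inequality, again using that each $x_{j\ast}$ is a unit vector:
\[
\Big\|\sum_{j\neq i} K_{ij}\, x_{j\ast}\Big\|_2 \;\leq\; \sum_{j\neq i} |K_{ij}|\,\|x_{j\ast}\|_2 \;=\; \sum_{j\neq i} |K_{ij}|.
\]
Combining the two displays yields $\|(Kx)_{i\ast}\|_2 \geq |K_{ii}| - \sum_{j\neq i}|K_{ij}|$.

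Finally I would invoke the strict diagonal dominance hypothesis $|K_{ii}| > 1 + \sum_{k\neq i}|K_{ik}|$, which rearranges to $|K_{ii}| - \sum_{j\neq i}|K_{ij}| > 1$, giving $\|(Kx)_{i\ast}\|_2 > 1$ for every $i$. I do not expect any serious obstacle here; the argument is just a row-wise reverse-triangle-inequality estimate. The only subtlety worth flagging is that the diagonal dominance assumed is strict and strengthened by the additive constant $1$ relative to the usual notion, and it is precisely this extra $1$ that upgrades the conclusion from the trivial $\|(Kx)_{i\ast}\|_2 > 0$ (which would merely reflect invertibility-type dominance) to the desired strict bound $\|(Kx)_{i\ast}\|_2 > 1$, i.e.\ that $Kx$ lands strictly outside the product of closed unit balls. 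One should also take care that all norms here are the Euclidean norm on the $d_0$-dimensional rows, so the triangle inequalities are applied to vectors rather than entrywise to scalars.
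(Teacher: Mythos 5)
Your proof is correct and follows essentially the same route as the paper's: split off the diagonal term $K_{ii}x_{i\ast}$, apply the reverse triangle inequality, bound the off-diagonal sum $\bigl\|\sum_{j\neq i}K_{ij}x_{j\ast}\bigr\|_2 \leq \sum_{j\neq i}|K_{ij}|$ by the triangle inequality, and invoke the strict dominance hypothesis. Your closing remark about the extra additive constant $1$ being exactly what pushes $Kx$ outside the product of closed unit balls is also the reason the lemma is stated this way in the paper.
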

\begin{proof}
 Let $1\leq i\leq n$ and $x_i = x_{i\ast} \in \mathbb{S}^{d_0-1}$. By using successively the reverse triangle inequality $\big|\|a\|-\|b\|\big|\leq \|a-b\|$ and the triangle inequality, we have
\begin{align*}
\|(Kx)_{i\ast} \|_2 &=\big\|K_{ii}x_{i\ast}-(-1)  \sum_{k\neq i} K_{ik} x_{k\ast}\big\|_2\geq \big| |K_{ii}| -\|\sum_{k\neq i} K_{ik}x_{k\ast}\|_2\big|,\\
 &\geq |K_{ii}| -\|\sum_{k\neq i} K_{ik}x_{k\ast}\|_2\geq |K_{ii}| -\sum_{k\neq i} |K_{ik}| > 1.
\end{align*}
\end{proof}
In other words, if $K$ is sufficiently diagonally dominant, all the rows of $Kx$ are vectors with a $2$-norm larger than one. This observation is useful since the iteration of Algorithm~\ref{Alg1} consists of successive multiplications by $K$ and projections on the product of the unit spheres.
By relying on Lemma~\eqref{Lemma:DiagDomin} and on the convexity of a Cartesian product of balls, we can provide a lower bound on the improvement of the objective values between successive iterations; see Proposition~\ref{PropositionMonotone} hereafter.

\section{Generalized power method}\label{appendix:GPM}
The strategy to maximize $f(x)$ is to iteratively maximize linear lower bounds obtained simply as follows
\[
f(x) \geq f(\bar{x}) +2\Tr\left((x-\bar{x})^\top K \bar{x}\right), \text{ for all } \bar{x}\in\mathbb{R}^{n\times d_0},
\]
and which is a consequence of the convexity of $f$ given  that we assumed $K\succ 0$.
More explicitly, by starting from an initial $\bar{x} = x_0 \in(\mathbb{S}^{d_0-1})^n$, the iteration is given by
\begin{equation}
 x_{m+1} = \argmax_{x\in(\mathbb{S}^{d_0-1})^n}\Tr(x^\top K x_m).\label{eq:PPMx}
\end{equation}
Since the maxima of a convex function over a convex set are at the extreme points of this set, and the function $f(x)$ considered here is convex, we can equivalently maximize $f(x)$ on the product of closed unit balls $(\mathbb{B}^{d_0})^n$, that is,
\[
\max_{x\in(\mathbb{S}^{d_0-1})^n} \Tr(x^\top K x) = \max_{x\in(\mathbb{B}^{d_0})^n} \Tr(x^\top K x).
\]
 Hence, the iteration~(\ref{eq:PPMx}) is of the type described in the paper~\cite{Journee:2010:GPM:1756006.1756021} and it is a slight generalization of the algorithm proposed in~\cite{Boumal}.
The iteration~(\ref{eq:PPMx}) is explicitly given by the following projection
\begin{equation}
(x_{m+1})_{i\ast} = \left(\Pi(Kx_{m})\right)_{i\ast} = \frac{(Kx_m)_{i\ast}}{\|(Kx_m)_{i\ast}\|_2},\ \text{ with } 1\leq i\leq n \text{ and }  1\leq \ell\leq d_0,\label{eq:iterate}
\end{equation}
where $\Pi$ is a projection on the product of closed unit balls $(\mathbb{B}^{d_0})^n$. Note that, thanks to the definition of~\eqref{eq:Kpreprocess} and by Lemma~\ref{Lemma:DiagDomin}, we know that for all $x\in(\mathbb{S}^{d_0-1})^n$ we have $\|(Kx)_{i\ast} \|_2 > 1$ for all  $i\in\{1,\dots,n\}$. Thus, we have $(Kx)_{i\ast}\neq 0$ for $i\in\{1,\dots,n\}$.

\subsection{Fixed points are critical points}
Following~\cite{Journee:2010:GPM:1756006.1756021}, we introduce the following first order criterion:
\[
\Delta(\bar{x}) = \max_{x\in (\mathbb{B}^{d_0})^n}\left\langle x-\bar{x}, \nabla f(\bar{x})\right\rangle,
\]
which satisfies clearly $\Delta(\bar{x}) \geq 0$ for all $\bar{x}\in (\mathbb{B}^{d_0})^n$. A critical point satisfies $\Delta(x)  = 0$. 
In particular, we have 
\begin{equation}
\Delta(x) = \sum_{i=1}^n \|(Kx)_{i\ast}\|_2 - \Tr(x^\top K x).\label{eq:Delta}
\end{equation}
Then, we can show that the fixed points of the algorithm are first order critical points of $f(x)$.
\begin{mylemma}[Criterion for criticality]\label{LemEquiv}
Let $x\in (\mathbb{S}^{d_0-1})^n$. The following statements are equivalent: (i) $x = \Pi(Kx)$, and 
(ii) $\Delta(x)  =0$.
\end{mylemma}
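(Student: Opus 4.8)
The plan is to work directly from the closed form of the criticality gap in \eqref{eq:Delta} and to reduce the whole equivalence to the equality case of the Cauchy--Schwarz inequality, carried out row by row. First I would rewrite the subtracted trace as a sum of row contributions, using the identity
\[
\Tr(x^\top K x) = \sum_{i=1}^n \langle x_{i\ast}, (Kx)_{i\ast}\rangle,
\]
so that \eqref{eq:Delta} becomes
\[
\Delta(x) = \sum_{i=1}^n \Big( \|(Kx)_{i\ast}\|_2 - \langle x_{i\ast}, (Kx)_{i\ast}\rangle\Big).
\]
Since $x\in(\mathbb{S}^{d_0-1})^n$ means $\|x_{i\ast}\|_2 = 1$ for every $i$, Cauchy--Schwarz gives $\langle x_{i\ast}, (Kx)_{i\ast}\rangle \le \|x_{i\ast}\|_2\,\|(Kx)_{i\ast}\|_2 = \|(Kx)_{i\ast}\|_2$, so each summand is nonnegative. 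This simultaneously re-derives $\Delta(x)\ge 0$ and shows that $\Delta(x)=0$ if and only if \emph{every} row contribution vanishes.

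Next I would analyse the vanishing of a single row term. The equality $\langle x_{i\ast}, (Kx)_{i\ast}\rangle = \|x_{i\ast}\|_2\,\|(Kx)_{i\ast}\|_2$ is precisely the equality case of Cauchy--Schwarz, which (since $x_{i\ast}\neq 0$) holds exactly when $(Kx)_{i\ast} = \mu_i\, x_{i\ast}$ for some scalar $\mu_i \ge 0$. Here the hypotheses of Lemma~\ref{Lemma:DiagDomin} apply, so $\|(Kx)_{i\ast}\|_2 > 1 > 0$, forcing $(Kx)_{i\ast}\neq 0$ and hence $\mu_i > 0$; taking norms yields $\mu_i = \|(Kx)_{i\ast}\|_2$. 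The equality condition is therefore equivalent to $x_{i\ast} = (Kx)_{i\ast}/\|(Kx)_{i\ast}\|_2 = \Pi(Kx)_{i\ast}$, recalling the row-normalisation defining $\Pi$ in \eqref{eq:P}.

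Combining the two steps closes the loop: $\Delta(x)=0$ holds iff every row term vanishes iff $x_{i\ast} = \Pi(Kx)_{i\ast}$ for all $i$, which is exactly the fixed-point condition $x = \Pi(Kx)$, giving (i)~$\Leftrightarrow$~(ii). The only point requiring care -- and the main (mild) obstacle -- is the equality analysis of Cauchy--Schwarz: one must invoke the strict diagonal dominance through Lemma~\ref{Lemma:DiagDomin} to guarantee $(Kx)_{i\ast}\neq 0$, thereby ruling out the degenerate constant $\mu_i = 0$ and pinning down the direction of $x_{i\ast}$ uniquely as the normalised image under $K$. Everything else is routine rearrangement.
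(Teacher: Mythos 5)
Your proposal is correct and follows essentially the same route as the paper's proof: decompose $\Delta(x)$ into per-row terms, observe nonnegativity via Cauchy--Schwarz on unit rows, and characterise criticality through the equality case. If anything, your treatment is slightly more complete, since the paper's ``$(ii)\Rightarrow(i)$'' direction stops at ``each term vanishes'' and leaves implicit the final colinearity step, together with the fact (via Lemma~\ref{Lemma:DiagDomin}) that $(Kx)_{i\ast}\neq 0$ so that $\Pi(Kx)$ is well defined, which you spell out explicitly.
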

\begin{proof}
($(i) \Rightarrow (ii)$) We assume $(x)_{i\ast} \|(Kx)_{i\ast}\|_2= (Kx)_{i\ast}$ and take the inner product with $x_{i}$ which yields
$\sum_{i=1}^n\|(Kx)_{i\ast}\|_2 \sum_{\ell =1}^{d_0} (x)^2_{i\ell}  = \Tr(x^\top K x) $.\\
($(i) \Leftarrow (ii)$) We assume that
$\sum_{i = 1}^n \sum_{\ell =1}^{d_0}  x_{i\ell}(K  x)_{i\ell} = \sum_{i=1}^n \|(Kx)_{i\ast}\|_2$
where by using the  Cauchy-Schwarz inequality, each term of the sum is dominated as follows: \[|\sum_{\ell =1}^{d_0}  x_{i\ell}(K  x)_{i\ell}|\leq \|(Kx)_{i\ast}\|_2.\] Then, we have a vanishing sum of positive terms
\[
\sum_{i = 1}^n \underbrace{\Big( \|(Kx)_{i\ast}\|_2-\sum_{\ell =1}^{d_0}  x_{i\ell}(K  x)_{i\ell}\Big)}_{\geq 0}=0,
\]
and hence each term vanishes.
\end{proof}

\subsection{Monotonicity}
In the paper~\cite{Boumal} which deals with a similar iteration in a different context, it is shown that $f(x_{m+1})\geq f(x_m)$ for all $m\geq 0$. 
We repeat the argument here for completeness.
Indeed, since $x_{m+1}$ is optimal, we have $ \Tr(x^\top K x_m)\leq \Tr(x_{m+1}^\top K x_m)$ for all $x\in(\mathbb{S}^{d_0-1})^n$.
In particular, as a consequence of the Cauchy-Schwarz inequality, we have
\begin{equation}
0\leq \Tr(x_{m+1}^\top K x_m)-f(x_m)\leq \sqrt{f(x_m)f(x_{m+1})}-f(x_m)
\end{equation}
yielding $\sqrt{f(x_m)}\leq \sqrt{f(x_{m+1})}$. In the following proposition, we improve slightly the guarantees of monotonicity of the objective values when $K$ is sufficiently diagonally dominant in the sense of Lemma~\ref{Lemma:DiagDomin}.
\begin{myprop}[Monotonicity of the objectives]\label{PropositionMonotone}
Let $K\in \mathbb{R}^{n\times n}$  symmetric such that $|K_{ii}| > 1+\sum_{k\neq i}|K_{ik}|$ for all $1\leq i\leq n$. Then, the sequence of objectives for the iteration~(\ref{eq:PPMx})  satisfies
\[
\|x_{m+1}-x_m\|_2^{2}\leq  2 \Delta(x_m) < f(x_{m+1})-f(x_m)
\]
for all integers $m\geq 0$.
\end{myprop}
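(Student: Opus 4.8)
The plan is to treat the two inequalities in the chain separately, after first rewriting the first-order quantity $\Delta(x_m)$ in a form adapted to the explicit iterate $x_{m+1}=\Pi(Kx_m)$ of \eqref{eq:iterate}. Writing $g_i:=(Kx_m)_{i\ast}$, $a_i:=(x_m)_{i\ast}$ and $b_i:=(x_{m+1})_{i\ast}=g_i/\|g_i\|_2$, I would first observe that the maximizer defining $\Delta$ over the product of balls is attained row-by-row in the direction of $g_i$, so that $\sum_i \|g_i\|_2=\Tr(x_{m+1}^\top Kx_m)$. Combined with formula \eqref{eq:Delta}, this yields the convenient identity $\Delta(x_m)=\Tr\big((x_{m+1}-x_m)^\top Kx_m\big)$, and, row-wise, $\Delta(x_m)=\sum_i \|g_i\|_2\big(1-\langle a_i,b_i\rangle\big)$, using $\langle a_i,g_i\rangle=\|g_i\|_2\langle a_i,b_i\rangle$.

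For the left inequality I would exploit that both $x_m$ and $x_{m+1}$ lie on $(\mathbb{S}^{d_0-1})^n$, so every row has unit norm and $\|x_{m+1}\|_2^2=\|x_m\|_2^2=n$. Expanding the square then gives $\|x_{m+1}-x_m\|_2^2=2\sum_i\big(1-\langle a_i,b_i\rangle\big)$. Comparing this term by term with $2\Delta(x_m)=2\sum_i \|g_i\|_2\big(1-\langle a_i,b_i\rangle\big)$, it suffices to verify for each $i$ that $(\|g_i\|_2-1)\big(1-\langle a_i,b_i\rangle\big)\geq 0$. This is exactly where diagonal dominance enters: Lemma~\ref{Lemma:DiagDomin} guarantees $\|g_i\|_2>1$, while $1-\langle a_i,b_i\rangle\geq 0$ since $a_i,b_i$ are unit vectors. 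Summing over $i$ delivers $\|x_{m+1}-x_m\|_2^2\leq 2\Delta(x_m)$.

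For the right inequality I would rely on the exact second-order expansion of the quadratic $f$. Setting $\delta=x_{m+1}-x_m$, a direct computation gives $f(x_{m+1})-f(x_m)=2\Tr(\delta^\top Kx_m)+\Tr(\delta^\top K\delta)=2\Delta(x_m)+\Tr(\delta^\top K\delta)$, where the identity for $\Delta(x_m)$ from the first step is used. Since $K\succ 0$ (guaranteed by the strict diagonal dominance together with positivity of the diagonal), the remainder $\Tr(\delta^\top K\delta)$ is strictly positive whenever $\delta\neq 0$, which gives $2\Delta(x_m)<f(x_{m+1})-f(x_m)$.

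The point to handle with care is the strictness of the right inequality at a fixed point: if $x_{m+1}=x_m$ then $\delta=0$, all three quantities vanish, and the strict inequality collapses to an equality. I would therefore read the chain as holding for iterates that are not yet critical --- equivalently, with $\Delta(x_m)>0$ by Lemma~\ref{LemEquiv} --- which is precisely the regime in which the algorithm keeps iterating. Apart from this edge case the argument is routine; the only genuinely quantitative ingredient is the row-wise bound $\|g_i\|_2>1$ furnished by diagonal dominance, which is what upgrades the plain monotonicity $f(x_{m+1})\geq f(x_m)$ recalled just before the proposition into the sharper sandwiched estimate.
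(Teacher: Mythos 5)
Your proof is correct, and for the first inequality it takes a genuinely different route from the paper. The paper proves $\|x_{m+1}-x_m\|_2^2 \leq 2\Delta(x_m)$ by invoking the general property of orthogonal projections onto convex sets (Lemma~\ref{Lemma:ProjConvex}) with $C = (\mathbb{B}^{d_0})^n$ --- diagonal dominance enters there only to guarantee, via Lemma~\ref{Lemma:DiagDomin}, that the row normalization $\Pi(Kx_m)$ coincides with the orthogonal projection onto the product of balls --- and then expands the squares, using $\|x_m\|_2^2=\|x_{m+1}\|_2^2=n$. You instead argue row by row: from $\|x_{m+1}-x_m\|_2^2 = 2\sum_i\bigl(1-\langle a_i,b_i\rangle\bigr)$ and $2\Delta(x_m)=2\sum_i\|g_i\|_2\bigl(1-\langle a_i,b_i\rangle\bigr)$, the claim reduces to $(\|g_i\|_2-1)\bigl(1-\langle a_i,b_i\rangle\bigr)\geq 0$, which follows from Lemma~\ref{Lemma:DiagDomin} and Cauchy--Schwarz. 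Your version is more elementary (no projection lemma needed) and makes the quantitative role of diagonal dominance transparent, while the paper's version is shorter once the projection lemma is available and applies verbatim to any projection-type iteration. For the second inequality the two arguments coincide: the exact quadratic expansion $f(x_{m+1})-f(x_m) = 2\Delta(x_m)+f(x_{m+1}-x_m)$ together with $K\succ 0$. On the strictness issue you are in fact more careful than the paper: the paper justifies the strict inequality by asserting $f(x)>0$ for \emph{all} $x\in\mathbb{R}^{n\times d_0}$, which fails at $x=0$, so its claim silently presupposes $x_{m+1}\neq x_m$; you make this hypothesis explicit (equivalently $\Delta(x_m)>0$, i.e., $x_m$ is not a fixed point, by Lemma~\ref{LemEquiv}), which is exactly the regime in which the algorithm has not yet terminated.
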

Also, since the objectives are monotone increasing and upper bounded then the objective values $f(x_m)$ converge. An easy consequence is that the stepsize converges as stated in Corollary~\ref{corol:step_size_convergence}. This is simply shown by summing the inequalities in Proposition~\ref{PropositionMonotone} and by taking a limit.
\begin{mycorol}\label{corol:step_size_convergence}
Under the conditions of Proposition~\ref{PropositionMonotone}, we have
\[
\sum_{m=0}^{\infty}\|x_{m+1}-x_m\|_2^{2}\leq  f_{\star}-f(x_{0}),
\]
where $f_{\star} = \lim_{m\to \infty}f(x_{m})$.
\end{mycorol}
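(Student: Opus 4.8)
The plan is to sum the per-step inequality already supplied by Proposition~\ref{PropositionMonotone} and to exploit the telescoping of the objective differences on the right-hand side. First I would record that Proposition~\ref{PropositionMonotone} gives, for every integer $m\geq 0$, the strict bound $\|x_{m+1}-x_m\|_2^{2} < f(x_{m+1})-f(x_m)$. In particular the right-hand side is strictly positive, so the sequence $\bigl(f(x_m)\bigr)_{m\geq 0}$ is strictly increasing.

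Next I would establish that $f_\star := \lim_{m\to\infty} f(x_m)$ exists and is finite, which is what legitimizes the statement. This follows because $f(x)=\Tr(x^\top K x)$ is continuous while all iterates $x_m$ lie in the compact set $(\mathbb{S}^{d_0-1})^n$; hence $f$ is bounded above along the sequence, and a monotone increasing sequence that is bounded above converges to its supremum, with $f(x_m)\leq f_\star$ for every $m$.

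Then I would sum the per-step bound over $m=0,\dots,N-1$. The left-hand side is a partial sum of nonnegative terms, while the right-hand side telescopes, giving
\[
\sum_{m=0}^{N-1}\|x_{m+1}-x_m\|_2^{2} < \sum_{m=0}^{N-1}\bigl(f(x_{m+1})-f(x_m)\bigr) = f(x_N)-f(x_0) \leq f_\star - f(x_0).
\]
Thus the partial sums of $\sum_m \|x_{m+1}-x_m\|_2^{2}$ form a nondecreasing sequence bounded above by $f_\star - f(x_0)$, so the series converges and its sum is at most $f_\star-f(x_0)$; the strict inequality valid at each finite stage relaxes to the non-strict $\leq$ upon passing to the limit, which is exactly the claimed bound.

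I do not anticipate a genuine obstacle here: the argument is essentially a telescoping sum combined with the monotone convergence of a bounded sequence. The only point deserving care is the finiteness of $f_\star$ (guaranteeing the series converges rather than diverging to $+\infty$), which I would justify via continuity of $f$ on the compact feasible set $(\mathbb{S}^{d_0-1})^n$ as above.
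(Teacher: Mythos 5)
Your proof is correct and follows essentially the same route as the paper: summing the per-step inequality of Proposition~\ref{PropositionMonotone}, telescoping the right-hand side, and passing to the limit using the monotone, bounded convergence of $f(x_m)$. Your explicit justification of the finiteness of $f_\star$ via continuity of $f$ on the compact set $(\mathbb{S}^{d_0-1})^n$ simply makes precise the boundedness the paper asserts in passing.
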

The rate of convergence, given in terms of the first order criterion~\eqref{eq:Delta} can also be obtained by summing the inequalities in Proposition~\ref{PropositionMonotone}.
\begin{mycorol}
Let $\Delta_k = \min_{m=1,\dots,k} \Delta(x_m)$.
Then, under the conditions of Proposition~\ref{PropositionMonotone}, we have
\[
\Delta_k\leq \frac{f_{\star}-f(x_{0})}{2k},
\]
where $f_{\star} = \lim_{m\to \infty}f(x_{m})$.
\end{mycorol}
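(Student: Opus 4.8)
The plan is to sum the key inequality from Proposition~\ref{PropositionMonotone} and exploit the fact that $\Delta_k$ is a minimum over the iterates. Proposition~\ref{PropositionMonotone} gives, for every $m\ge 0$, the bound $2\Delta(x_m) < f(x_{m+1}) - f(x_m)$. I would sum this inequality over $m = 1, \ldots, k$; the right-hand side telescopes to $f(x_{k+1}) - f(x_1)$, yielding
\[
2\sum_{m=1}^{k}\Delta(x_m) < f(x_{k+1}) - f(x_1).
\]

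Next I would control the two boundary terms using monotonicity. Since the objective values are nondecreasing and converge to $f_\star = \lim_{m\to\infty} f(x_m)$, we have $f(x_{k+1}) \le f_\star$ and $f(x_1) \ge f(x_0)$; combining these gives $f(x_{k+1}) - f(x_1) \le f_\star - f(x_0)$. This is the only place where a small bookkeeping point arises: the telescoping sum naturally produces $f(x_1)$ at the lower end, whereas the stated bound involves $f(x_0)$, and the gap is absorbed precisely because monotonicity makes $f_\star - f(x_1)$ no larger than $f_\star - f(x_0)$.

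Finally I would lower-bound the left-hand side using the definition $\Delta_k = \min_{m=1,\dots,k}\Delta(x_m)$, so that each of the $k$ summands is at least $\Delta_k$ and hence $\sum_{m=1}^{k}\Delta(x_m) \ge k\,\Delta_k$. Putting the pieces together yields $2k\,\Delta_k < f_\star - f(x_0)$, and dividing by $2k$ gives the claimed estimate (the strict inequality inherited from the Proposition immediately implies the non-strict one in the statement). I expect no genuine obstacle here: the argument is a direct telescoping sum, and the only subtlety is the index alignment between the sum and the boundary term, which is handled by the monotonicity of $f(x_m)$ as described above.
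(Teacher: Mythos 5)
Your proof is correct and follows exactly the route the paper intends: summing the inequality $2\Delta(x_m) < f(x_{m+1})-f(x_m)$ from Proposition~\ref{PropositionMonotone}, telescoping, and bounding the minimum by the average. The index alignment you handle via monotonicity ($f(x_1)\geq f(x_0)$, $f(x_{k+1})\leq f_\star$) is the only detail the paper leaves implicit, and you resolve it correctly.
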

Before proving Proposition~\ref{PropositionMonotone}, we recall a technical result concerning the projection on convex set.
\begin{mylemma}[see e.g.~\cite{Bubeck}]\label{Lemma:ProjConvex}
Let $C$ be a convex set of $\mathbb{R}^n$ and let $\Pi:\mathbb{R}^n\to C$ be the orthogonal projection on $C$. Let $z\in\mathbb{R}^n$ and $x\in C$. Then, we have
\[
\|\Pi(z)-x\|_2^2+\|\Pi(z)-z\|_2^2\leq \|z-x\|_2^2.
\]
\end{mylemma}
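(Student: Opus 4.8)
The plan is to derive the inequality from the standard variational (obtuse-angle) characterization of the Euclidean projection onto a convex set, which I would establish first and then combine with a single expansion of a squared norm.

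First I would record the variational inequality: for the orthogonal projection $\Pi(z)$ onto the convex set $C$ and any $x \in C$,
\[
\langle z - \Pi(z),\, x - \Pi(z)\rangle \leq 0.
\]
To prove this, I would fix $x \in C$ and, using convexity of $C$, note that $\Pi(z) + t\big(x - \Pi(z)\big) \in C$ for every $t \in [0,1]$. Since $\Pi(z)$ minimizes the distance to $z$ over $C$, the scalar function $g(t) = \|z - \Pi(z) - t(x - \Pi(z))\|_2^2$ attains its minimum on $[0,1]$ at $t = 0$. Writing $g$ as a quadratic in $t$ and imposing the one-sided condition $g'(0) \geq 0$ yields exactly the displayed inequality; this is the single place where convexity of $C$ and the defining minimality of $\Pi(z)$ both enter.

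With the variational inequality in hand, I would expand the right-hand side of the claim by inserting $\Pi(z)$:
\[
\|z - x\|_2^2 = \big\|(z - \Pi(z)) + (\Pi(z) - x)\big\|_2^2 = \|z - \Pi(z)\|_2^2 + \|\Pi(z) - x\|_2^2 + 2\langle z - \Pi(z),\, \Pi(z) - x\rangle.
\]
The cross term is nonnegative, since it equals $-2\langle z - \Pi(z),\, x - \Pi(z)\rangle \geq 0$ by the variational inequality. Discarding this nonnegative term gives $\|z - x\|_2^2 \geq \|z - \Pi(z)\|_2^2 + \|\Pi(z) - x\|_2^2$, which, after recognizing $\|z-\Pi(z)\|_2 = \|\Pi(z)-z\|_2$ and reordering, is precisely the asserted bound.

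The only delicate point — and the step I would treat most carefully — is the passage from minimality of $\Pi(z)$ to the variational inequality via $g'(0) \geq 0$, where convexity keeps the segment inside $C$ and admissibility of $t = 0$ as an interior-side minimizer forces the sign of the derivative. Everything after that is a routine algebraic expansion, so I expect no further obstacle.
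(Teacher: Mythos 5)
Your proof is correct and is exactly the standard argument behind this lemma: the paper itself gives no proof, citing Bubeck instead, and the cited proof is precisely your two steps — the obtuse-angle variational inequality $\langle z-\Pi(z),\,x-\Pi(z)\rangle\leq 0$ obtained from minimality of $\Pi(z)$ along the segment kept in $C$ by convexity, followed by the expansion of $\|z-x\|_2^2$ and discarding the nonnegative cross term. The only pedantic caveat, shared by the paper's statement, is that well-definedness of $\Pi$ tacitly requires $C$ closed and nonempty, which your argument (correctly) takes for granted as part of the hypothesis that $\Pi$ exists.
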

We now prove Proposition~\ref{PropositionMonotone}.
\begin{proof}[Proof of Proposition~\ref{PropositionMonotone}]
First, notice that the Cartesian product of convex sets is convex and recall Lemma \ref{Lemma:DiagDomin}. 
Thus, we can use Lemma~\ref{Lemma:ProjConvex} with $C = (\mathbb{B}^{d_0})^n$, and we find
\[
\|\Pi(Kx_m)-x_m\|_2^2\leq \|Kx_m-x_m\|_2^2-\|\Pi(Kx_m)-Kx_m\|_2^2.
\]
The right-hand side above is simplified by expanding the squares, and we obtain
\[
\|x_{m+1}-x_m\|_2^2\leq 2\Tr\left((x_{m+1}-x_m)^\top Kx_m\right) = 2\Delta(x_m),
\]
where the last equality is obtained thanks to~\eqref{eq:Delta} in which the definition of $x_{m+1}$ in \eqref{eq:iterate} is substituted.
Finally, we find 
\begin{align*}
  2\Delta(x_m) = 2\Tr\left((x_{m+1}-x_m)^\top Kx_m\right)&= f(x_{m+1})-f(x_m) -f(x_{m+1}-x_{m})\\
  &< f(x_{m+1})-f(x_m),
\end{align*}
where we used that $f(x)>0$ for all $x\in \mathbb{R}^{n\times d_0}$.
\end{proof}

\subsection{Initialization}
In this paper, we use an initialization procedure inspired by~\cite{ChenCandes}.  The initial $x_0\in (\mathbb{S}^{d_0-1})^n$ is the projection by $\Pi$ of a matrix obtained by sampling uniformly $d_0$ rows of the matrix $Q$. Next, we construct the positive semi-definite matrix $K$ as it is explained in~(\ref{eq:Kpreprocess}). 
\subsection{Generalized Power Method with Momentum}
Finally, we present here a modified version of the iteration analysed in the previous section. This improved algorithm is inspired by the accelerated gradient descent techniques. Empirically, we observe a significant improvement of this iteration with momentum compared to the Generalized Power Method  (GPM) .
Let the sequence $r_m = (m-1)/(m+2)$ for $m\in\{1,2,\dots\}$. 
Let $m\geq 1$ and $x_1\in(\mathbb{S}^{d_0-1})^n$ given. The Generalized Power Method with Momentum  (GPMM)  is given by the following iteration:
\begin{equation}
\begin{cases}
 y_{m} = x_{m}+r_m (x_m - x_{m-1})\\
 x_{m+1} = \argmax_{x\in(\mathbb{S}^{d_0-1})^n}\Tr(x^\top K y_m)
 \end{cases}.\label{Alg}
\end{equation}
As an illustration, we provide in Figure~\ref{Fig:AlgoComparison} a comparison of the convergence of the GPM and GPMM on the ellipsoidal embedding of \textsc{PowerEU} where we observe that GPMM converges faster.
\begin{figure}[h]
\centering\includegraphics[scale = 0.55]{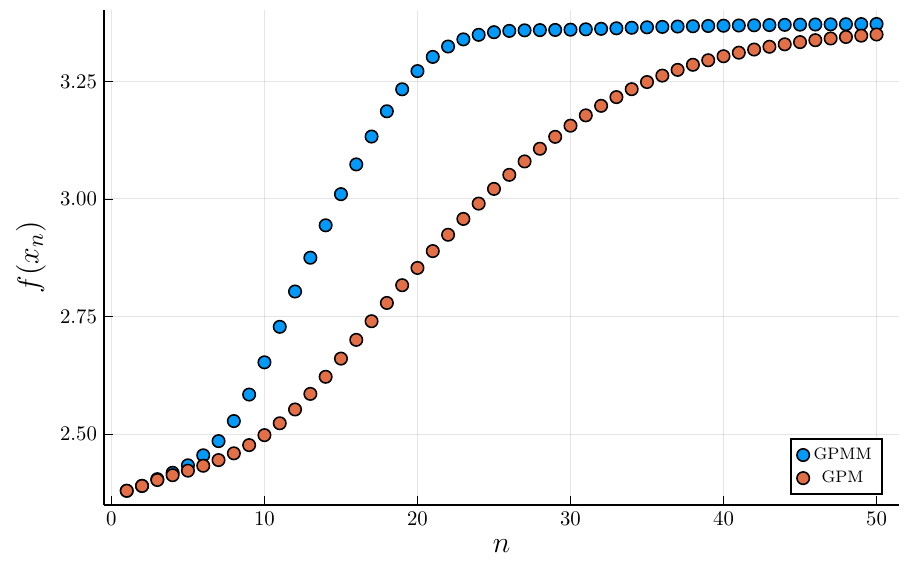}
\caption{Generalized Power Method (GPM) compared to Generalized Power Method with Momentum (GPMM) applied to the ellipsoidal embedding of \textsc{PowerEU} illustrated in Figure~\ref{Fig:Power}.\label{Fig:AlgoComparison}}

\end{figure}
\section{Numerical results}\label{sec:NumericalResults}

\subsection{Hardware}
All the simulations in this paper were performed on a laptop with  $1.1$GHz Dual-Core Intel Core i3 processor and $8$GB RAM.

\subsection{Partitionning of LFR benchmark networks}

To complement the comparison of our partitioning method (ellipsoidal embedding followed by \cref{Alg:VecPart}) and Louvain for the clustering of the LFR benchmark neworks of \cref{Fig:nmi_vs_mu}, we provide in \cref{Fig:nmi_vs_mu_ablation} a study of the influence of the parameters $d_0$ and $k$ on the quality of the retrieved community structure.
We observe that if these two parameter are large enough the NMI between the retrieved and planted partitions is rather stable with respect to variations of these parameters.

\subsection{Partitionning of real networks}\label{sec:Simu}

In this section, we compare the partitions obtained from ellipsoidal embeb-and-partition methods with other community detection methods on the real networks of \cref{Table:RealNetworks}.
To do so, the vector partitioning algorithm is executed on an ellipsoidal embedding computed using the modularity matrix as our descriptor matrix, with a pre-selected upper bound $d_0$ on the embedding dimension.
The partitioning algorithm is runned $5$ times on one embedding and the partition maximizing the objective \cref{eq:ObjectiveVectorPartition} is kept. 
The modularity of this partition is then reported in \cref{fig:ResultsReal}.

We compare our results with the Louvain method~\cite{Louvain} as discussed above, the embedding method node2vec~\cite{node2vec}, as well as a spectral embedding~\cite{ZhangNewman} also based on the modularity matrix.
To derive a clustering from the node2vec embedding, we perform a $32$-dimensional embedding on which  $k$-means clustering is applied.
The number of clusters $k$  ($1\leq k \leq 50$) is selected in order to maximize the modularity of the obtained partition.
In order to avoid storing the full modularity matrix in memory to compute the spectral embedding, we implement a Krylov subspace method by taking advantage of the structure of the modularity matrix (sparse + rank $1$). 
Then, to find clusters, we use again the Vector Partitioning of Algorithm~\ref{Alg:VecPart}.
The dimension of the embedding $n_{\rm ev}$  ($1 \leq n_{\rm ev} \leq 50$) is the one maximizing the modularity of the partition.

The ellipsoidal  and spectral embedding are implemented in Julia.
Node2vec uses the original python implementation of \cite{node2vec}.
For the Louvain method, we use the implementation based on the igraph sofware package~\cite{igraph}. 
Notice that in contrast to the other algorithms, the Louvain method does not provide an embedding of the graph, but only a partitioning.
\begin{figure}[tb!]
  \centering
  \includegraphics[scale=0.5]{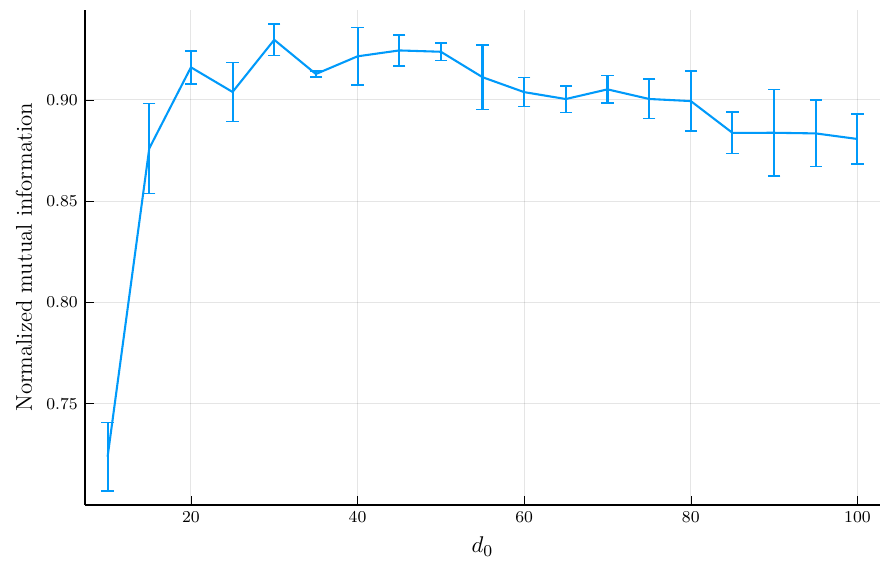}
  \hfill
  \includegraphics[scale=0.5]{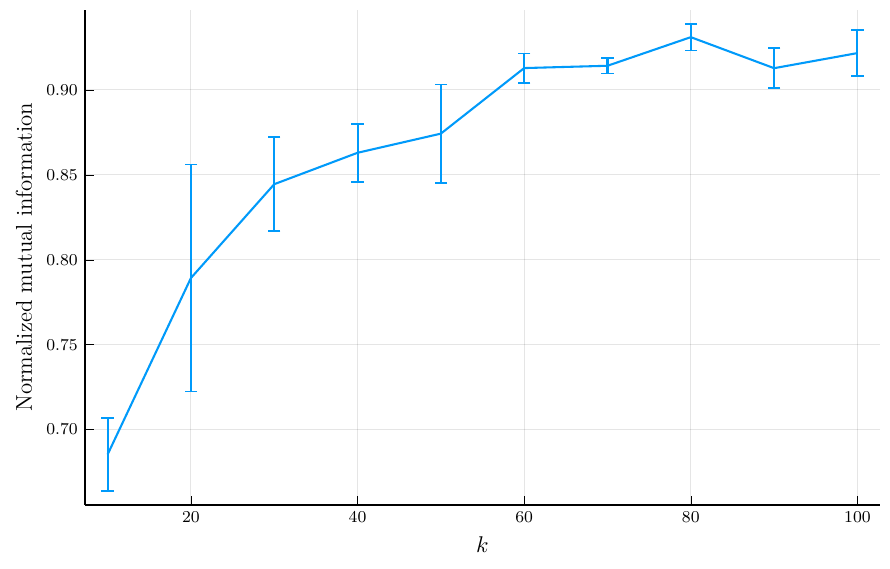}
  \caption{Ablation study for the influence of $d_0$ and $k$ on the results of \cref{Fig:nmi_vs_mu}. NMI \emph{vs} $d_0$ (left, mixing parameter $\texttt{mu}=0.4$ and $k=100$) and NMI \emph{vs} number of initialized clusters $k$ (right, mixing parameter $\texttt{mu}=0.4$ and $d_0 = 30$).
  As in \cref{Fig:nmi_vs_mu}, we display mean and standard deviations over $3$ runs. 
  We refer to \cref{a:nmi_vs_mixing} for parameters associated to the generated LFR benchmark networks. \label{Fig:nmi_vs_mu_ablation}}
\end{figure}
\begin{table}[tb!]
\centering\footnotesize
\begin{tabular}{lccl}
\toprule
  Network & $|\mathcal{V}|$ & $|\mathcal{E}|$ & Description\\
   \midrule
  \textsc{PolBlogs} \cite{Adamic}  & 1221 & 18958 & Blogs about US politics.  \\
  \textsc{PowerEU} \cite{PowerEUrope} & 2712 & 3580 & Power network. \\
  \textsc{Facebook} \cite{McAuley}  & 4039 & 176468 & Social network\\
  \textsc{PowerEU}  \cite{Watts:1998:nature}  & 4941 & 6594 & Power network.\\
  \textsc{Wikivote}  \cite{Leskovec:2010} & 7066 &103663 & Votes between Wikipedia users. \\
  \textsc{Internet}  \cite{WebNewman} & 22963 & 96872& Snapshot of the  Internet.\\
  \textsc{CondMat03}  \cite{Newman404}& 27519 & 411251 & Collaborations.\\
  \textsc{Pokec} \cite{Pokec} &1632803 &30622564 & Social network.\\
  \bottomrule
\end{tabular}
\caption{Features of several real networks used in the study of \cref{fig:ResultsReal}.\label{Table:RealNetworks}}
\end{table}
\begin{table}[tb!]\centering%\small
\resizebox{\columnwidth}{!}{%
\begin{tabular}{lccccccccccccccccccc}
\toprule
   Network & \phantom{abc}& \multicolumn{5}{c}{Ellipsoidal  Embedding $+$ Vector partition }    & \phantom{abc}&  \multicolumn{3}{c}{Louvain} & \phantom{abc}&  \multicolumn{3}{c}{node2vec $+$ $k$-means} & \phantom{abc}&  \multicolumn{3}{c}{Spectral} \\
   \cmidrule{3-7}    \cmidrule{9-11}    \cmidrule{13-15}    \cmidrule{17-20}
  & & $d_{0}$ & ${d}_{\rm  eff}$ & $Q$ & $n_c$  & time (s)  && $Q$ & $n_c$  & time (s) && $Q$ & $n_c$  & time (s) && $n_{\rm ev}$ & $Q$ & $n_c$ & time (s)\\
   \midrule
 \textsc{PolBlogs} & & 10 & 2 &  {\bf 0.43} & 9  & 0.2     && {\bf 0.43} & 11  & 0.1 && {\bf 0.43} & 2  & $25$ && 2 & 0.40  & 3 & 9\\
   \midrule
  \textsc{PowerEU} & & 50  & 3  & 0.91 & 37 & 4.1   && {\bf 0.92} & 30 &  0.1 && 0.91 &  38  & $20$ && 32 & 0.84  & 38 & 37\\
   \midrule
  \textsc{Facebook} & & 50 & 3  & 0.82 & 32 & 10.1    && {\bf 0.83} & 17  &  0.4 && 0.82 & 9  & $94$ && 8 & 0.79  & 9 & 133\\
   \midrule   
      \textsc{PowerUS} & & 50 & 3  & 0.92 & 34 & 10.1     &&  {\bf 0.94} & 44  &  0.1 && 0.93 & 35  & $36$ && 48 & 0.72  & 38 & 243\\
   \midrule
    \textsc{Wikivote} &  & 30 & 2  & {\bf 0.42} & 9 & 1.1    &&  {\bf 0.42} & 7  & 0.22 && 0.41  & 5  & $212$ && 3 & {\bf 0.42}  & 5 & 342 \\
   \midrule
    \textsc{Internet} & & 70 & 3   &  0.62 & 15 & 177   && {\bf 0.66} & 37 & 0.6 && 0.64 & 17  & $469$ && 45 & 0.50 & 32 & 782\\
   \midrule
    \textsc{CondMat03}$^\star$ & & 250 & 6  & 0.69 & 67 & 658   && {\bf 0.74} & 64 & 1.2 && 0.70 & 47 & $397$ && 50 & 0.51 & 40 & 936\\
    \midrule
   \textsc{Pokec} & & 50 &4   & 0.72 & 19  & 6343  &&  {\bf 0.73} & 47 &  658 && {\slash} & {\slash}  & {\slash} && 50 & 0.61 & 35 & 40384\\
  \bottomrule
\end{tabular}
}
\caption{Clustering of the networks of \cref{Table:RealNetworks}. Here $n_c$ is the number of communities obtained while $Q$ is the modularity of the associated partition. The running time corresponds to one execution.
Clusterings of the ellipsoidal embeddings are obtained by initializing $k=100$ centroids in the vector partitioning algorithm; see \cref{Alg:VecPart} (The $^\star$ indicates that $k=200$ centroids were used). 
The final partition is the one maximizing the objective \cref{eq:ObjectiveVectorPartition} over $5$ executions of \cref{Alg:VecPart}.
The Louvain method was executed once.
Node2vec was used to generate a $32$-dimensional embedding and the partition obtained by $k$-means with the best modularity is reported.
Notice that we could not use node2vec on the large \textsc{Pokec} network due to memory issues. 
For the spectral algotithm, $n_{\rm ev}$ denotes the number of dominant eigenvectors used. 
We only report the time necessary to compute the spectral embedding.\label{fig:ResultsReal}}
\end{table}

\subsection{Simulation of \cref{Fig:embed_toy}}
\label{a:fig_embed_toy}
We generated the benchmark networks \textsc{LFR1} and \textsc{LFR2} of \cref{Fig:embed_toy} thanks to the function \texttt{LFR\_benchmark\_graph} of the \texttt{networkx} package in Julia, with the parameters given in \cref{t:param_toy_LFR} and
$\texttt{average\_degree}$ $=\texttt{None}$,
$\texttt{min\_degree}=20$,
$\texttt{max\_degree}=50$,
$\texttt{tol}=1\mathrm{e}{-07}$,
$\texttt{max\_iters}=500$,
$\texttt{seed}=0$,
$\texttt{max\_community}=1000$.
\begin{table}[tb!]\centering
\begin{tabular}{lccccc}
\toprule
Graph & \texttt{n} & \texttt{mu} & \texttt{tau1} & \texttt{tau2} & \texttt{min\_community}\\
\midrule
\textsc{LFR1}& 2000 & 0.1 & 2 & 1.1 & 200\\
\textsc{LFR2}& 2000 & 0.2 & 2 & 3 & 100\\
\bottomrule
\end{tabular}
\caption{Parameters used for generating the networks of \cref{Fig:embed_toy}.\label{t:param_toy_LFR}}
\end{table}

\subsection{Simulations of \cref{Fig:nmi_vs_mu}\label{a:nmi_vs_mixing}}
The LFR networks of \cref{Fig:nmi_vs_mu} were generated tahnks to the function \texttt{LFR\_benchmark\_graph} of the \texttt{networkx} package with the following parameters:
$\texttt{n}=1000$,
$\texttt{tau1}=2$,
$\texttt{tau2}=2$,
$\texttt{average\_degree}=15$,
$\texttt{min\_degree}=\texttt{None}$,
$\texttt{max\_degree}=50$,
$\texttt{min\_community}=50$,
$\texttt{max\_community}=\texttt{None}$,
$\texttt{tol}=1\mathrm{e}{-07}$,
$\texttt{max\_iters}=500$,
$\texttt{seed}=0$.

\FloatBarrier
\bibliographystyle{unsrt}
\bibliography{Bib_VectorPartition-4}

\end{document}